\newtheorem{thm}{Theorem}[section]
 \newtheorem{lem}[thm]{Lemma}
\newtheorem{asm}[thm]{Assumption}
\newtheorem{remark}[thm]{Remark}
\newtheorem{example}[thm]{Example}
\numberwithin{equation}{section}
\newcommand{\cB}{\mathcal{B}}
 \newcommand{\cP}{\mathcal{P}}
\DeclareMathOperator{\law}{Law}
\newcommand{\sigmalow}{\underline{\sigma}}
\newcommand{\sigmahigh}{\overline{\sigma}}
\def \F{\mathbb{F}}
\newcommand{\set}{\triangleq}
\numberwithin{equation}{section}
\theoremstyle{plain}
\begin{document}

\title{Super-replication with nonlinear transaction costs and
  volatility uncertainty}
 \thanks{The authors were supported by the Einstein Foundation Grant no.A 2012 137. The second author also
acknowledges support of the Marie Curie Actions fellowships Grant no.618235.
}
\author{Peter Bank, Yan Dolinsky and Selim G\"okay \\
 Hebrew University and TU Berlin
}%
\address{
 Department of Mathematics, TU Berlin\\
 {e.mail: bank@math.tu-berlin.de}}
\address{
 Department of Statistics, Hebrew University of Jerusalem\\
 {e.mail: yan.yolinsky@mail.huji.ac.il}}

\address{
 Department of Mathematics, TU Berlin\\
 {e.mail: gokay@math.tu-berlin.de}}

\date{\today}

\begin{abstract}
We study super-replication of contingent claims in an illiquid
  market with model uncertainty. Illiquidity is captured by nonlinear
  transaction costs in discrete time and model uncertainty arises as
  our only assumption on stock price returns is that they are in a
  range specified by fixed volatility bounds. We provide a dual
  characterization of super-replication prices as a supremum of
  penalized expectations for the contingent claim's payoff. We also
  describe the scaling limit of this dual representation when the
  number of trading periods increases to infinity. Hence, this paper
  complements the results in~\cite{DS1} and~\cite{K} for the case of
  model uncertainty.
\end{abstract}
\subjclass[2010]{91G10, 91G40}
\keywords{Super--replication, Hedging with Friction, Volatility Uncertainty, Limit Theorems}

\maketitle

\markboth{P.Bank, Y.Dolinsky and S.G\"okay}{Nonlinear transaction costs and volatility uncertainty}
\renewcommand{\theequation}{\arabic{section}.\arabic{equation}}
\pagenumbering{arabic}

\section{Introduction}\label{sec:1}
We study an illiquid discrete--time market with model uncertainty.  As
in \cite{DS1} we consider the case where the size of the trade has an
immediate but temporary effect on the price of the asset.  This model
captures the classical case of proportional transaction costs as well
as other illiquidity models such as the discrete-time version of the
model introduced by Cetin, Jarrow and Protter in \cite{CJP} for
continuous time. By contrast to \cite{DS1}, our sole assumption on the
price dynamics of the traded security is that the absolute value of the
log-returns is bounded from below and above. This is a natural
discrete-time version of the widely studied uncertain volatility
models; see, e.g., \cite{DM1}, \cite{NS} and \cite{STZ}. The paper
\cite{D} studies super-replication of game options in such a
discrete-time model, but does not allow for any market frictions.

The benchmark problem in models with uncertain volatilities is the
description of super-replication prices. In our version of such a
result in a model with transaction costs, we provide a general duality
for European options with an upper-semicontinuous
payoff. Specifically, Theorem~\ref{thm2.1} provides a combination of
the dual characterization for super-replication prices in
frictionless uncertain volatility models (see \cite{DM1}) with
analogous duality formulae in binomial markets with frictions (see
\cite{DS1}).  Let us emphasize that, under volatility uncertainty, we
do not define super-replication prices merely in an almost
sure--sense because we actually insist on super-replication for
\emph{any} possible evolution of stock prices which respects the
specified volatility bounds.  Since, in contrast to \cite{DS1}, in our
setup the set of all possible stock price evolutions is uncountable we
cannot find a dominating probability measure which will give positive
weight to every possible evolution of stock prices. Hence, the duality
result of Theorem~\ref{thm2.1} goes beyond the now classical results
for almost sure super-replication, and it complements the findings of
\cite{BN}, who consider a probabilistically more general setting albeit
without frictions. For a discussion of the fundamental theorem of
asset pricing with proportional transaction costs under model
uncertainty, see \cite{BN1}.  The key observation for the proof of
Theorem~\ref{thm2.1} is that for continuous payoffs one can find
approximative discrete models where classical duality results give us
super-replication strategies that can be lifted to the original
setting with uncertain volatilities in a way that allows us to control
the difference in profits and losses.

In Theorem~\ref{thm2.2} we consider the special case of a convex
payoff profile. In frictionless models with volatility uncertainty it
is well-known from, e.g., \cite{L}, \cite{H} or Remark~7.20 in~\cite{FSbook}
that the super-replication price coincides with the one computed in
the classical model where the volatility always takes the maximal
value. We show that this result also holds in our framework with
nonlinear transaction costs if these are deterministic.

Finally, we study the scaling limit of our super-replication prices
when the number of trading periods becomes large. Theorem~\ref{thm2.3}
describes this scaling limit as the value of a stochastic volatility
control problem on the Wiener space.  We use the duality result
Theorem~\ref{thm2.1} and study the limit of the dual terms by applying
the theory of weak convergence of stochastic processes.  This extends
the approach taken in \cite{DS1}, since we need to extend Kusuoka's
construction of suitable martingales (see \cite{K}, Section 5) to our
setting with volatility uncertainty. As a result, we get an
understanding of how Kusuoka's finding of volatility uncertainty in
the description of the scaling
limit is extended to our setting which already starts with
volatility uncertainty. In the special case of a
frictionless setup, we recover Peng's \cite{P} result that the
limit is equal to the payoff's $G$--expectation with upper and lower
bounds as in the discrete setup.  In setups with market frictions,
continuous-time models are known to produce trivial super-replication
prices when one considers proportional transaction costs, cf., e.g.,
~\cite{LS} and~\cite{SSC}, or no liquidity effect at all when one has
nonlinear, differentiable transaction costs or market impact,
see~\cite{CJP} or~\cite{BB}. Our scaling limit, by contrast, gives a
value in between these two extremes and can be viewed as a convex
measure of risk for the payoff as in~\cite{FS} or~\cite{FG}.

Our approach to the proof of the main results is purely probabilistic
and based on the theory of weak convergence of stochastic
processes. This approach allows us to study a quite general class of
path dependent European options, and a general class of nonlinear
transaction costs.

\section{Preliminaries and Main Results}
\label{sec:2}
\subsection{The discrete-time model}

Let us start by introducing a discrete time financial model with
volatility uncertainty. We fix a time horizon $N \in \mathbb N$
and consider a financial market with a riskless savings account and a
risky stock. The savings account will be used as a numeraire and thus
we normalize its value at time $n = 0,\dots,N$ to $B_n = 1$. The stock
price evolution starting from $s>0$ will be denoted by $S_n>0$,
$n=0,1,\dots,N$. Hence, by introducing the log-return $X_n \set
\log(S_n/S_{n-1})$ for period $n=1,\dots,N$ we can write
\begin{equation}
  \label{eq:1}
  S_n = s_0 \exp\left(\sum_{m=1}^n X_m\right), \quad n=0,\dots,N.
\end{equation}
Our sole assumption on these dynamics will be that there are
volatility bounds on the stock's price fluctuations in the sense that
the absolute values of these log-returns are bounded from below and
above:
\begin{equation}
  \label{eq:2}
  \sigmalow \leq |X_n| \leq \sigmahigh, \; n=1,\dots,N,
\end{equation}
for some constants $0 \leq \sigmalow \leq \sigmahigh <\infty$.  In
other words the log-returns will take values in the path-space
$$
\Omega \set \Omega_{\sigmalow,\sigmahigh} \set
\left\{\omega=(x_1,\dots,x_N) \in \mathbb{R}^N \;:\; \sigmalow \leq |x_n| \leq
  \sigmahigh, \; n=1, \dots, N \right\}
$$
and identifying these returns with the canonical process
$$
X_k(\omega) \set x_k \text{ for } \omega=(x_1,\dots,x_N)\in\Omega
$$
we find that \eqref{eq:1} allows us to view the stock's price
evolution as a process $S=(S_n)_{n=0,\dots,N}$ defined on
$\Omega$. Clearly, the canonical filtration
\begin{equation*}
  \mathcal{F}_n \set \sigma(X_1,\dots,X_n), \quad n=0,\dots,N,
\end{equation*}
coincides with the one generated by $S=(S_n)_{n=0,\dots,N}$.  Similar
models with volatility uncertainty have been considered in \cite{D}.

The aim of the present paper is to study the combined effects of
volatility uncertainty and nonlinear transaction costs. Following
\cite{CJP, DS1, GS}, we assume
these costs to be given by a penalty function
$$
g:\{0,1,...,N\}\times \Omega \times\mathbb R\rightarrow\mathbb R_{+}
$$
where $g(n,\omega,\beta)$ denotes the costs (in terms of our numeraire
$B$) of trading $\beta \in \mathbb{R}$ worth of stock at time $n$ when
the evolution of the stock price is determined by the returns from
$\omega \in \Omega$.

\begin{asm}
\label{a.g}
The cost function
$$
g:\{0,1,...,N\} \times \Omega\times\mathbb R\rightarrow\mathbb R_{+}
$$
is $(\mathcal{F}_n)_{n=0,\dots,N}$-adapted. Moreover, for any
$n=0,\dots,N$, the costs
$g(n,\omega,\beta)$ are a nonnegative
convex function in $\beta \in \mathbb{R}$ with $g(n,\omega,0)=0$ for any fixed
$\omega \in \Omega$ and a continuous function in $\omega \in \Omega$
for any fixed $\beta \in \mathbb{R}$.
\end{asm}

For simplicity of notation we will often suppress the dependence of
costs on $\omega$ and simply write $g_n(\beta)$ for
$g(n,\omega,\beta)$. We will proceed similarly with other functions
depending on $\omega \in \Omega$.

In our setup a trading strategy is a pair $\pi=(y,\gamma)$ where $y$
denotes the initial wealth and $\gamma:\{0,1,...,N-1\}\times
\Omega\rightarrow\mathbb{R}$ is an $(\mathcal{F}_n)$-adapted process
specifying the number $\gamma_n=\gamma(n,\omega)$ of shares held at
the beginning of any period $n=0,\dots,N-1$ with the stock price
evolution given by $\omega \in \Omega$. The set of all portfolios
starting with initial capital $y$ will be denoted by $\mathcal{A}(y)$.

The evolution of the mark-to-market value $Y^\pi =
(Y^\pi_n(\omega))_{n=0,\dots,N}$ resulting from a trading strategy
$\pi=(y,\gamma) \in \mathcal{A}(y)$ is given by $Y^{\pi}_0=y$ and the
difference equation
\begin{equation*}
\label{2.1}
Y^{\pi}_{n+1}- Y^\pi_n=
\gamma_n(S_{n+1}-S_n)-g_n((\gamma_n-\gamma_{n-1})S_{n}), \quad n=0,\dots,N-1,
\end{equation*}
where we let $\gamma_{-1}\set 0$. Hence, we start with zero stocks in
our portfolio and trading to the new position $\gamma_n$ to be held
after time $n$ incurs the transaction costs
$g_n((\gamma_n-\gamma_{n-1})S_n)$, which is the only friction in our
model. Hence, the value $Y^{\pi}_{n+1}$ represents the portfolio's
mark-to-market value \emph{before} the transaction at time $n+1$ is
made. Note that, focussing on the mark-to-market value rather than the
liquidation value, we disregard in particular the costs of unwinding
any non-zero position for simplicity.

\subsection{Robust super-replication with frictions}

The benchmark problem for models with uncertain volatility is the
super-replication of a contingent claim. We investigate this problem
in the presence of market frictions as specified by a function $g$
satisfying Assumption~\ref{a.g}. So consider a European option
$\F:\mathbb{R}_+^{N+1} \to \mathbb{R}_+$ which pays off $\F(S)$ when the
stock price evolution is $S=(S_n)_{n=0,\dots,N}$. The
super-replication price $V(\F) = V^{\sigmalow,\sigmahigh}_{g}(\F)$ is
then defined as
\begin{equation*}\label{2.5}
V^{\sigmalow,\sigmahigh}_{g}(\F) \set \inf\left\{y \in \mathbb{R} \;:\; \exists\pi\in\mathcal{A}(y)
  \text{ with }
Y^\pi_N(\omega)\geq \F(S(\omega)) \ \forall \omega\in \Omega_{\sigmalow,\sigmahigh}\right\}.
\end{equation*}
We emphasize that we require the construction of a \emph{robust}
super-replication strategy $\pi$ which leads to a terminal value
$Y^\pi_N$ that dominates the payoff $X$ in \emph{any} conceivable
scenario $\omega \in \Omega$.

Our first result provides a dual description of super-replication prices:

\begin{thm}\label{thm2.1}
Let $G_n:\Omega \times \mathbb{R}\rightarrow\mathbb{R}_{+}$, $n=0,1,...,N$,
denote the Legendre-Fenchel transform (or convex conjugate) of $g_n$, i.e.,
\begin{equation*}\label{2.6--}
\begin{split}
G_n(\alpha)\set\sup_{\beta\in\mathbb{R}}\left\{\alpha \beta -g_n(\beta)\right\},
\quad \alpha \in\mathbb{R}.
\end{split}
\end{equation*}

Then, under Assumption~\ref{a.g}, the super-replication price of any
contingent claim $\F$ with upper-semicontinuous payoff function
$\F:\mathbb{R}_+^{N+1} \to \mathbb{R}_+$ is given by
\begin{equation*}
  V(\F)=\sup_{\mathbb P\in \mathcal{P}_{\sigmalow,\sigmahigh}}\mathbb E_{\mathbb P}\left[\F(S)-\sum_{n=0}^{N-1}
      G_n\left(\frac{\mathbb{E}_{\mathbb  P}[S_N\;|\;\mathcal{F}_n]-S_n}{S_n}\right)  \right]
\end{equation*}
where $\mathcal{P}_{\sigmalow,\sigmahigh}$ denotes the set of all
Borel probability measures on $\Omega=\Omega_{\sigmalow,\sigmahigh}$
and where $\mathbb{E}_{\mathbb P}$ denotes the expectation with
respect to such a probability measure $\mathbb P$.
\end{thm}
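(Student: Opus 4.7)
The plan is to prove the two inequalities separately; the bulk of the work lies in the upper bound $V(\F)\leq$ RHS.

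For the lower bound, I would pick any super-replicating $\pi=(y,\gamma)\in\cA(y)$ and any $\mathbb P\in\cP_{\sigmalow,\sigmahigh}$. Setting $\Delta\gamma_m\set\gamma_m-\gamma_{m-1}$ and performing an Abel-style summation recasts the gains term $\sum_{n=0}^{N-1}\gamma_n(S_{n+1}-S_n)$ as $\sum_{m=0}^{N-1}\Delta\gamma_m(S_N-S_m)$. Taking $\mathbb P$-expectations and using the $\cF_m$-measurability of $\Delta\gamma_m$ converts this into $\sum_{m=0}^{N-1}\E_{\mathbb P}[\Delta\gamma_m S_m\cdot\alpha_m]$ with $\alpha_m\set(\E_{\mathbb P}[S_N|\cF_m]-S_m)/S_m$. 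A pathwise application of the Fenchel--Young inequality $\beta\alpha-g_m(\beta)\leq G_m(\alpha)$ with $\beta=\Delta\gamma_m S_m$, together with $Y^\pi_N\geq\F(S)$, then delivers $y\geq\E_{\mathbb P}[\F(S)-\sum_m G_m(\alpha_m)]$; the supremum over $\mathbb P$ and the infimum over super-replicating $y$ close this direction.

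For the harder upper bound, I would first reduce to continuous payoffs: since $\Omega$ is compact and $\F$ upper-semicontinuous, one can find a decreasing sequence of continuous $\F_k\downarrow\F$, and both the primal value and the dual expression are monotone and continuous along such approximations (by a Dini-type argument combined with the tightness of $\cP_{\sigmalow,\sigmahigh}$ on the compact space $\Omega$). Next, for continuous $\F$ and given $\ve>0$, I would approximate the uncertain-volatility model by a sequence of finite submodels $\Omega^{(k)}\subset\Omega$ in which the log-returns $X_n$ are restricted to a finite grid $\Sigma^{(k)}\subset[-\sigmahigh,-\sigmalow]\cup[\sigmalow,\sigmahigh]$ whose mesh shrinks with $k$. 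On each $\Omega^{(k)}$ the scenario set is finite and hence dominated, so the classical superhedging duality for discrete markets with nonlinear transaction costs (in the spirit of \cite{DS1}) provides a super-replicating strategy $\pi^{(k)}=(y_k,\gamma^{(k)})$ whose initial capital is within $\ve$ of the corresponding finite-grid dual value. Convergence of these finite-grid dual values to the full dual expression follows from the density of finitely supported measures in $\cP_{\sigmalow,\sigmahigh}$ together with the continuity of $\F$ and of each $g_n(\cdot,\beta)$.

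The main obstacle is to \emph{lift} the strategy $\pi^{(k)}$, originally defined only on paths in $\Omega^{(k)}$, to a strategy $\tilde\pi^{(k)}$ on the full uncountable path space $\Omega$ in a way that preserves super-replication up to an error vanishing with the mesh. My plan is to introduce a measurable nearest-neighbour projection $\mathrm{proj}_k:\Omega\to\Omega^{(k)}$ in the log-return coordinates and set $\tilde\gamma^{(k)}(\omega)\set\gamma^{(k)}(\mathrm{proj}_k(\omega))$. Controlling the discrepancy $Y^{\tilde\pi^{(k)}}_N(\omega)-Y^{\pi^{(k)}}_N(\mathrm{proj}_k(\omega))$ reduces to two ingredients: (i) uniform continuity of $\F$ and of $\omega\mapsto g_n(\omega,\beta)$ on the compact set $\Omega$; and (ii) a uniform-in-$k$ a priori bound on the relevant position sizes, which I would obtain by truncating excessively large $\gamma^{(k)}$ and observing, via the convexity and nonnegativity of $g$ together with the boundedness of $S$ and of $\F(S)$, that such truncation affects super-replication only negligibly. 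Adding a small extra capital cushion of order $\ve$ to absorb the lifting error then produces a strategy on all of $\Omega$ which super-replicates $\F$ with initial capital within $2\ve$ of the right-hand side; letting $\ve\to 0$ completes the proof.
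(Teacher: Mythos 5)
Your lower bound and your treatment of continuous payoffs follow the paper's route essentially verbatim: Fenchel--Young after an Abel summation for ``$\geq$'', and, for continuous $\F$, finite (multinomial) submodels of $\Omega$, the finite-dimensional duality of \cite{DS1} applied there, and a lift of the finite-model hedge to all of $\Omega$ via a projection, with the error controlled by moduli of continuity of $\F$, $S$ and $g_n(\cdot,\beta)$ together with a uniform bound on the hedge ratios. Two remarks on details: for the uniform bound the paper does not truncate; it shows by induction, using that a super-replicator of a nonnegative claim forces the portfolio value to stay nonnegative in the extreme up/down scenarios, that $|\tilde\gamma^k_n|\leq A(1+e^{\sigmahigh})^n/((1-e^{-\sigmahigh})S_n)$. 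Your truncation claim is unsubstantiated as stated (truncating a super-replicating strategy need not preserve super-replication), but it can be replaced by this inductive argument. Also, in your chain of inequalities you only need the trivial direction that the finite-grid dual value is $\leq U(\F)$, because the finite-grid measures form a subset of $\cP_{\sigmalow,\sigmahigh}$; the ``density of finitely supported measures'' direction you invoke is not needed, and is itself delicate because of the conditional expectations in the penalty.

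The genuine gap is the reduction from upper-semicontinuous to continuous payoffs. You take continuous $\F_k\downarrow\F$ and assert that the dual expression is continuous along this approximation ``by a Dini-type argument combined with tightness''. Dini's theorem requires the pointwise limit to be continuous, which $\F$ is not, so the convergence is not uniform and this argument fails. What is actually needed is $\limsup_k U(\F_k)\leq U(\F)$ (the primal side $V(\F)\leq V(\F_k)$ is immediate since $\F_k\geq\F$), and this is nontrivial because the penalty $\E_{\P}\bigl[\sum_n G_n\bigl((\E_{\P}[S_N\,|\,\cF_n]-S_n)/S_n\bigr)\bigr]$ involves conditional expectations, which are not weakly continuous functionals of $\P$; one cannot simply pass to a weak limit of near-optimizers $\P^k$. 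The paper's Lemma~\ref{lem:4} supplies exactly the missing argument: it augments the state with the conditional expectations, i.e.\ considers the laws of $(X,\E_{\P^k}[S_N],\E_{\P^k}[S_N\,|\,\cF_1],\dots,\E_{\P^k}[S_N\,|\,\cF_{N-1}])$, extracts a weak limit by Prohorov, passes to an almost surely convergent Skorohod representation, identifies the limiting coordinates as conditional expectations of $S_N$ via test functions, and then combines upper semicontinuity of $\F$, lower semicontinuity and nonnegativity of $G_n$, Fatou's lemma and Jensen's inequality to conclude. Without an argument of this type (or some substitute establishing the required upper semicontinuity of the dual functional along $\F_k\downarrow\F$), your proof covers only continuous payoffs.
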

The proof of this theorem will be carried out in
Section~\ref{sec:discretedualityproof} below.  Observe that this
result is a hybrid of the dual characterization for super-replication
prices in frictionless uncertain volatility models and of analogous
duality formulae in binomial markets with frictions; see \cite{DM} and
\cite{DS1}, respectively.

\subsection{Convex payoff functions}

Our next result deals with the special case where the payoff $\F$ is a
nonnegative \emph{convex} function of the stock price evolution
$S=(S_n)_{n=0,\dots,N}$. It is well known that in a frictionless
binomial model, the price of a European option with a convex payoff
is an increasing function of the volatility. This implies that
super-replication prices in uncertain volatility models coincide with
the replication costs in the model with maximal compatible
volatility; see \cite{L}. The next theorem gives a
generalization of this claim for the setup of volatility uncertainty
under friction.

\begin{thm}\label{thm2.2}
  Suppose that the cost function $g=g(n,\omega,\beta)$ is
  deterministic in the sense that it does not depend on $\omega \in
  \Omega$.

  Then the super-replication price of any convex payoff
  $\F:\mathbb{R}_+^{N+1} \to \mathbb{R}_+$ is given by
  $$
  V^{\sigmalow,\sigmahigh}_g(\F)=\overline{V}_g(\F)
  $$
  where $\overline{V}_g(\F) \set V^{\sigmahigh,\sigmahigh}_g(\F)$
  denotes the super-replication price of $\F=\F(S)$ in the Binomial model with frictions for $S$ with volatility $\sigmahigh$ and cost
  function $g$.
\end{thm}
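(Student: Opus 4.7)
The plan is to prove the nontrivial direction $V^{\sigmalow,\sigmahigh}_g(\F) \le \overline V_g(\F)$; the reverse inequality is immediate because $\Omega_{\sigmahigh,\sigmahigh} \subseteq \Omega_{\sigmalow,\sigmahigh}$, so any strategy that robustly super-replicates on the larger path space does so a fortiori on the smaller one. For the hard direction I would run a backward dynamic-programming argument, parametrising the state at time $n$ by the price history $(s_0,\dots,s_n)$ together with the dollar value of the stock position carried into $n$, namely $\xi_n\set\gamma_{n-1}s_n$. The key motivation for this reparametrisation is that if the agent rebalances by $\eta$ dollars of stock at time $n$, then the cash position loses $\eta + g_n(\eta)$, a convex function of $\eta$ \emph{alone} precisely because $g$ is assumed deterministic; the stock value becomes $\xi_n + \eta$ and is carried to $(\xi_n+\eta)e^{X_{n+1}}$ in the next period. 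Writing $u_n(s_0,\dots,s_n,\xi_n)$ for the minimal cash needed to super-replicate from such a state yields the Bellman recursion
\begin{equation*}
u_n(s_0,\dots,s_n,\xi_n) = \inf_{\eta \in \R} \left\{\eta + g_n(\eta) + \sup_{X \in I} u_{n+1}\bigl(s_0,\dots,s_n, s_n e^X, (\xi_n+\eta) e^X\bigr)\right\}
\end{equation*}
with terminal condition $u_N(s_0,\dots,s_N,\xi_N) = \F(s_0,\dots,s_N) - \xi_N$; taking $I = [-\sigmahigh,-\sigmalow] \cup [\sigmalow,\sigmahigh]$ or $I = \{\pm\sigmahigh\}$ yields, via $u_0(s_0,0)$, the prices $V^{\sigmalow,\sigmahigh}_g(\F)$ and $\overline V_g(\F)$ respectively.

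I would then prove by backward induction that the two DPs produce the same value function and that $u_n^{\sigmahigh,\sigmahigh}$ is jointly convex in $(s_0,\dots,s_n,\xi_n)$. The base case $u_N = \F - \xi_N$ is convex by convexity of $\F$. For the inductive step, assuming $u_{n+1}^{\sigmahigh,\sigmahigh}$ is jointly convex, fix $(s_0,\dots,s_n,\xi_n,\eta)$ and observe that the map $t \mapsto u_{n+1}^{\sigmahigh,\sigmahigh}(s_0,\dots,s_n, t s_n, t(\xi_n+\eta))$ is the restriction of a convex function to a ray, hence convex on $(0,\infty)$. Its maximum over $t \in [e^{-\sigmahigh}, e^{\sigmahigh}]$ is therefore attained at one of the endpoints $e^{\pm\sigmahigh}$, and since $\{\pm\sigmahigh\}$ lies inside the larger set $I$, enlarging the feasible set for $X$ from $\{\pm\sigmahigh\}$ to $I$ leaves the inner supremum unchanged, which gives $u_n^{\sigmalow,\sigmahigh} = u_n^{\sigmahigh,\sigmahigh}$. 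To propagate joint convexity, observe that the integrand in the binomial Bellman equation is the sum of the convex $\eta + g_n(\eta)$ and the pointwise maximum of two compositions of the jointly convex $u_{n+1}^{\sigmahigh,\sigmahigh}$ with the linear maps $(s_0,\dots,s_n,\xi_n,\eta) \mapsto (s_0,\dots,s_n, s_n e^{\pm\sigmahigh}, (\xi_n+\eta) e^{\pm\sigmahigh})$; this is jointly convex in $(s_0,\dots,s_n,\xi_n,\eta)$, and partially minimising over $\eta$ preserves joint convexity in the remaining variables, closing the induction. Evaluating at $n=0$ with $\xi_0 = 0$ then yields the theorem.

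The main obstacle is preserving joint convexity through the induction, and this is the precise point at which the deterministic hypothesis on $g$ is used: if $g_n(\eta,\omega)$ depended nontrivially on the path $(s_0,\dots,s_n)$, the coupling between the price coordinates and the cost could destroy joint convexity of the Bellman integrand, even though $g$ would remain convex in its first argument. A minor residual technicality is the standard verification that the Bellman recursion really computes the robust super-replication price (existence of an adapted measurable selection of the minimising $\eta$); this is routine once each $u_n^{\sigmahigh,\sigmahigh}$ is known to be convex and hence continuous on the interior of its effective domain.
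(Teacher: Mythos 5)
Your proposal is correct, but it proves the theorem by a genuinely different route than the paper. The paper's argument is pathwise and constructive: it takes an $\epsilon$-optimal super-replicating strategy $\overline{\gamma}$ in the $\sigmahigh$-binomial model, writes each realized return $e^{X_n(\omega)}$ as a convex combination of $e^{\pm\sigmahigh}$ with weights $\lambda_n^{(\pm1)}(\omega)$, aggregates these into weights $\lambda_N^{\overline{\omega}}(\omega)$ on binomial paths summing to one, and then averages the binomial super-replication inequality over $\overline{\Omega}$ with these weights. Convexity of $\F$ and of $g_n$ (via Jensen with the discrete weights, where the deterministic nature of $g$ is needed to treat $g_n$ as one fixed convex function across all binomial paths) shows that the averaged strategy $\gamma_n(\omega)=\sum_{\overline{\omega}^n}(\overline{\gamma}_n\overline{S}_n)(\overline{\omega}^n,1,\dots,1)\lambda_n^{\overline{\omega}^n}(\omega)/S_n(\omega)$ super-replicates on all of $\Omega$ with the same capital. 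Your route instead works on value functions: backward induction showing that the binomial Bellman value is jointly convex in (price history, dollar stock position) and that, by endpoint attainment of a convex function of $t=e^X$ on $[e^{-\sigmahigh},e^{\sigmahigh}]$, enlarging the return set from $\{\pm\sigmahigh\}$ to $[-\sigmahigh,-\sigmalow]\cup[\sigmalow,\sigmahigh]$ does not change the inner supremum; you correctly locate the use of the deterministic-cost hypothesis in preserving joint convexity of the Bellman integrand, which is the exact analogue of where the paper invokes Jensen for $g_n$. What your approach buys is a conceptually transparent reduction to two standard convexity facts (endpoint maxima and stability of convexity under partial minimization); what it costs is the verification, which the paper's construction completely sidesteps, that the Bellman recursion really computes the robust price over the uncountable scenario set: one needs finiteness/properness of the value functions (e.g.\ $u_n\geq-\xi_n$ via an adversarial path, and $u_n<\infty$ via liquidation, so that convexity gives continuity and the partial infimum is proper) and a measurable $\epsilon$-optimal selection of $\eta$ to produce an adapted strategy. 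These points are indeed routine given the continuity you get from convexity, but they should be spelled out; note also that the easy direction of the DP identification (any super-replicating strategy's cash dominates $u_n$ along every path) needs no measurability and is a clean backward induction. With those verifications added, your argument is a complete and valid alternative proof.
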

\begin{proof}
  The relation `$\geq$' holding true trivially, it suffices to
  construct, for any $\epsilon>0$, a strategy $\gamma$ which
  super-replicates $\F(S)$ in every scenario from $\Omega$ starting
  with initial capital $y=\epsilon+\overline{V}_g(\F)$.

  The binomial model with volatility $\sigmahigh$ can be formalized on
  $\overline{\Omega} \set \{-1,1\}^N$ with canonical process
  $\overline{X}_k(\overline{\omega})\set\overline{x}_k$ for
  $\overline{\omega}=(\overline{x}_1,\dots,\overline{x}_N) \in
  \overline{\Omega}$ by letting the stock price evolution be given
  inductively by $\overline{S}_0 \set s_0$ and $\overline{S}_n \set
  \overline{S}_{n-1} \exp(\sigmahigh \overline{X}_n)$,
  $n=1,\dots,N$. With $(\overline{\mathcal{F}}_n)_{n=0,\dots,N}$
  denoting the corresponding canonical filtration, we get from the
  definition of $\overline{V}_g(\F)$ that there is an
  $(\overline{\mathcal{F}}_n)_{n=0,\dots,N}$-adapted process
  $\overline{\gamma}$ such that with $\overline{\gamma}_{-1}\set 0$ we
  have
  \begin{equation}
    \label{eq:3}
    \epsilon+\overline{V}_g(\F)+\sum_{n=0}^{N-1} \overline{\gamma}_n
    (\overline{S}_{n+1}-\overline{S}_n) - \sum_{n=0}^{N-1}
    g_n((\overline{\gamma}_n-\overline{\gamma}_{n-1})\overline{S}_n)
    \geq \F(\overline{S})
  \end{equation}
  everywhere on $\overline{\Omega}$.

  In view of~\eqref{eq:2}, for any $\omega \in \Omega$ and
  $n=1,\dots,N$ there are unique weights $\lambda_n^{(+1)}(\omega),
  \lambda_n^{(-1)}(\omega) \geq 0$ with
  $\lambda_n^{(+1)}(\omega)+\lambda_n^{(-1)}(\omega)=1$ such that
  \begin{equation}\label{eq:4}
    e^{X_n(\omega)} = \lambda_n^{(+1)}(\omega) e^{\sigmahigh} +
    \lambda_n^{(-1)}(\omega)e^{-\sigmahigh}.
  \end{equation}
  It is readily checked that also the weights
  \begin{align*}
    \lambda^{\overline{\omega}^n}_n(\omega) \set\prod_{m=1}^n
    \lambda_m^{(\overline{\omega}^n_m)}(\omega),
  \quad \overline{\omega}^n=(\overline{\omega}^n_1,\dots,\overline{\omega}^n_n)
  \in \{-1,+1\}^n,
    \end{align*}
   sum up to $1$. Indeed,
  \begin{equation}
    \label{eq:5}
    \sum_{\overline{\omega}^n \in \{-1,+1\}^n}
    \lambda^{\overline{\omega}^n}_n =\prod_{m=1}^n \left(\lambda_m^{(+1)}(\omega)+\lambda_m^{(-1)}(\omega)\right)=1, \quad n=1,\dots,N.
  \end{equation}
  Moreover, for $n=1,...,N-1$ we have
  \begin{equation}
    \label{eq:6}
    \lambda_n^{\overline{\omega}^n} =\lambda_n^{\overline{\omega}^n}\prod_{m=n+1}^N \left(\lambda_m^{(+1)}(\omega)+\lambda_m^{(-1)}(\omega)\right)=
    \sum_{\overline{\omega}^{N-n} \in \{-1+,1\}^{N-n}}
    \lambda_N^{(\overline{\omega}^n,\overline{\omega}^{N-n})},
  \end{equation}
  which in conjunction with~\eqref{eq:4} and the adaptedness of
  $\overline{S}$ entails the representation
  \begin{align}
    \label{eq:7}
    S_n(\omega) &=s_0\prod_{m=1}^n \left(\lambda_m^{(+1)}(\omega) e^{\sigmahigh} +
    \lambda_n^{(-1)}(\omega)e^{-\sigmahigh}\right)\\\nonumber
    &=  \sum_{\overline{\omega}^n \in \{-1,+1\}^n}
    \overline{S}_n(\overline{\omega}^n,1,\dots,1)
    \lambda_n^{\overline{\omega}^n}(\omega)
    = \sum_{\overline{\omega} \in \overline{\Omega}}
    \overline{S}_n(\overline{\omega})
    \lambda_N^{\overline{\omega}}(\omega)\nonumber
  \end{align}
 for any $n=1,\dots,N$, $\omega \in \Omega$.

  Now evaluate~\eqref{eq:3} at $\overline{\omega} \in
  \overline{\Omega}$, multiply by $\lambda^{\overline{\omega}}_N$ and
  then take the sum over all $\overline{\omega} \in
  \overline{\Omega}$.

  The right side of~\eqref{eq:3} then aggregates to
  \begin{align}\label{eq:8}
    R & \set \sum_{\overline{\omega} \in \overline{\Omega}}
    \F(\overline{S}(\overline{\omega}))\lambda_N^{\overline{\omega}}
 \geq \F\left(\sum_{\overline{\omega} \in \overline{\Omega}}
    \overline{S}(\overline{\omega})\lambda_N^{\overline{\omega}}\right)=\F(S)
  \end{align}
  where the estimate follows from~\eqref{eq:5} in conjunction with the
  convexity of $\F:\mathbb{R}_+^{N+1} \to \mathbb{R}$ and where for the
  last identity we exploited~\eqref{eq:7}.

  On the left side of~\eqref{eq:3} the contributions from the constant
  $\epsilon+\overline{V}_g(\F)$ just reproduce this very constant because
  of~\eqref{eq:5}. Since $\overline{\gamma}$ and $\overline{S}$ are
  $(\overline{\mathcal{F}}_n)_{n=0,\dots,N}$-adapted, the $n$th
  summand in the first sum of~\eqref{eq:3} contributes
  \begin{align*}
    I_n & \set \sum_{\overline{\omega} \in \overline{\Omega}}
    \overline{\gamma}_n(\overline{\omega})
    (\overline{S}_{n+1}(\overline{\omega})-\overline{S}_n(\overline{\omega}))\lambda_N^{\overline{\omega}} \\
    & = \sum_{\overline{\omega}^n \in \{-1,1\}^n} \sum_{\overline{x}
      \in \{-1,1\}} (\overline{\gamma}_n \overline{S}_n)
    (\overline{\omega}^n,1,\dots,1)(e^{\overline{\sigma}\overline{x}}-1)
    \lambda^{\overline{\omega}^n}_n \lambda^{(\overline{x})}_{n+1}.
  \end{align*}
 By definition of $\lambda_{n+1}^{(\pm 1)}$ we have
 \begin{equation*}
   \sum_{\overline{x}
      \in \{-1,1\}}
    (e^{\overline{\sigma}\overline{x}}-1)\lambda^{(\overline{x})}_{n+1}=e^{X_{n+1}}-1 = (S_{n+1}-S_n)/S_n
 \end{equation*}
 which entails
 \begin{align*}
   \label{eq:9}
   I_n &= \sum_{\overline{\omega}^n \in \{-1,1\}^n} (\overline{\gamma}_n \overline{S}_n)
    (\overline{\omega}^n,1,\dots,1) \lambda^{\overline{\omega}^n}_n (S_{n+1}-S_n)/S_n
    \\
   & = \gamma_n(S_{n+1}-S_n)
 \end{align*}
 where the $(\mathcal{F}_n)_{n=0,\dots,N}$-adapted process $\gamma$ is
 given by $\gamma_0 \set \overline{\gamma}_0$ and
 \begin{equation*}
   \gamma_n(\omega) = \sum_{\overline{\omega}^n \in \{-1,1\}^n}
   (\overline{\gamma}_n \overline{S}_n)
   (\overline{\omega}^n,1,\dots,1)\lambda_n^{\overline{\omega}^n}(\omega)/S_n(\omega),
   \quad \omega \in \Omega,
 \end{equation*}
 for $n=1,\dots,N-1$. In a similar fashion the $n$th summand from the
 second sum in~\eqref{eq:3} gives
 \begin{align*}
   I\!I_n & \set \sum_{\overline{\omega} \in \overline{\Omega}}
   g_n((\overline{\gamma}_n(\overline{\omega})-
   \overline{\gamma}_{n-1}(\overline{\omega}))\overline{S}_n(\overline{\omega}))\lambda_N^{\overline{\omega}}
   \\
   & \geq g_n\left(\sum_{\overline{\omega} \in \overline{\Omega}}
     (\overline{\gamma}_n(\overline{\omega})-
     \overline{\gamma}_{n-1}(\overline{\omega}))
     \overline{S}_n(\overline{\omega})\lambda_N^{\overline{\omega}}\right)
   \\
   & = g_n((\gamma_n-\gamma_{n-1})S_n)
 \end{align*}
 where the estimate is due to~\eqref{eq:5} and the convexity of
 $g_n:\mathbb{R} \to \mathbb{R}$ and where the last identity is due to
 the adaptedness of $\overline{\gamma}$, $\overline{S}$ and
 to~\eqref{eq:6}. As a consequence, the left side of~\eqref{eq:3}
 aggregates in the above manner to
 \begin{align*}
  L &\set \epsilon+\overline{V}_g(\F) + \sum_{n=0}^{N-1} I_n - \sum_{n=0}^{N-1} I\!I_n
   \\
& \leq  \epsilon+\overline{V}_g(\F) + \sum_{n=0}^{N-1} \gamma_n(S_{n+1}-S_n)- \sum_{n=0}^{N-1}g_n((\gamma_n-\gamma_{n-1})S_n).
 \end{align*}
 In light of our estimate~\eqref{eq:8} for the analogously aggregated
 right side of~\eqref{eq:3} this shows that $\gamma$ super-replicates
 $\F(S)$ with initial capital $\epsilon+\overline{V}_g(\F)$. This
 accomplishes our proof.
\end{proof}

\subsection{Scaling limit}\label{sec:3}

Our last result gives a dual description for the scaling limit of our
super-replication prices when the number of periods $N$ becomes large,
stock returns are scaled by $1/\sqrt{N}$ and earned over periods of
length $1/N$. The limiting trading costs will be specified in terms of
a function
\begin{align*}
h:[0,1] \times C[0,1]\times\mathbb R&\rightarrow\mathbb R_{+}\\
(t,w,\beta) \mapsto h_t(w,\beta)
\end{align*}
such that
\begin{itemize}
\item for any $t \in [0,1]$, $w \in C[0,1]$, $\beta \mapsto
h_t(w,\beta)$ is nonnegative and convex with $h_t(w,0)=0$;

\item for any $\beta \in \mathbb{R}$ the process
$h(\beta)=(h_t(w,\beta))_{t \in [0,1]}$ is progressively measurable in
the sense that $h_t(w,\beta)=h_t(\tilde w,\beta)$
if $w_{[0,t]}=\tilde w_{[0,t]}$.
\end{itemize}

For technical reasons we were able to establish our scaling limit
  only for a suitably truncated penalty function. For a given
  truncation level $c>0$, we consider the linearly extrapolated costs $h^c$ given by
\begin{equation*}
  h^c_t(w,\beta) \set
 \begin{cases}
  h_t(w,\beta) \text{ for } \beta \in I_c(h)   \\
 \text{linear extrapolation with slope $c$ beyond } I_c(h)
 \end{cases}
\end{equation*}
where
$I_c(h) \set \left\{\beta \in \mathbb{R} \;:\; \left|\frac{\partial
      h}{\partial \beta}\right| \leq c\right\}$
denotes the interval around zero where the slope of $h$ has not yet
exceeded $c$ in absolute value. For a truncated penalty $h^c$ the dual
formula of super-replication prices will only involve measures which
satisfy a certain tightness condition; this allows us to establish the
upper bound of our scaling limit in Section \ref{sec:leq}. For the
original $h$ (by taking $c$ to infinity) we can only prove the lower
bound of the super-replication prices, the proof for a tight upper
bound remaining an open problem in this case.

The cost for the $N$ period model with returns in
\begin{equation*}
  \Omega^N \set \{\omega^N=(x_1,\dots,x_N) \;:\; \sigmalow/\sqrt{N}
  \leq |x_n| \leq \sigmahigh/\sqrt{N}, n=1,\dots,N\}
\end{equation*}
are given by
\begin{equation}\label{eq:10}
  g^{N,c}_{n}(\omega^N,\beta) \set h_{n/N}^{c/\sqrt{N}}(\overline{S}^N(\omega^N),\beta)
\end{equation}
where $\overline{S}^N(\omega^N) \in C[0,1]$ denotes the continuous linear interpolation
in $C[0,1]$ of the points
\begin{equation*}
  \overline{S}^N _{n/N} (\omega^N) \set s_0 \exp\left(\sum_{m=1}^n x_m\right), \; n=0,\dots,N,
\end{equation*}
for  $\omega^N=(x_1,\dots,x_N) \in \Omega^N$.

The technical assumption for our asymptotics to work out is the
following:
\begin{asm}
\label{quadratic}
The Legendre-Fenchel transform $H:[0,1] \times C[0,1]\times\mathbb
R\rightarrow\mathbb R$ with
$$
H_t(w,\alpha)=\sup_{\beta\in\mathbb R}
\left\{\alpha\beta -h_t(w,\beta)\right\}, \quad t \in [0,1], \; w \in
C[0,1], \; \alpha \in \mathbb{R}
$$
has polynomial growth in $(w,\alpha)$ uniformly in $t$ in the
sense that there are constants $C,p_1,p_2\geq 0$ such that
\begin{equation*}
  H_t(w,\alpha)\leq C(1+||w||^{p_1}_{\infty})(1+|\alpha|^{p_2}), \quad
  (t,w,\alpha)\in [0,1] \times C[0,1]\times\mathbb{R}.
\end{equation*}
In addition $H$ is continuous in $(t,w)$ and essentially quadratic in
$\alpha$ asymptotically, i.e., there is a function $\hat{H}:[0,1]
\times \mathcal C[0,1]\rightarrow\mathbb R_{+}$ such that for any
sequence ${\{(t_N,w_N,\alpha_N)\}}_{N=1,2,\dots}$ converging to
$(t,w,\alpha)$ in $[0,1] \times C[0,1]\times\mathbb{R}$ we
have
$$
\lim_{N\rightarrow\infty} |N H_{t_N}(w_N,\alpha_N/\sqrt N)-\hat{H}_t(w)\alpha^2|=0.
$$
\end{asm}

Let us give two examples.
\begin{example}
\begin{enumerate}
\item[(i)] Proportional transaction costs:
Fix $c>0$ and consider the cost functions given by
$$g^{N,c}_{n}(\omega^N,\beta) \set \frac{c}{\sqrt N}|\beta| $$
for our binomial market model with $N$ trading periods. This example
was studied in \cite{K} for binomial models and
corresponds in our setup to the case
$\hat H=H\equiv 0$.

\item[(ii)] Quadratic costs. For a given constant $\Lambda >0$, let
$$
h_t(w,\beta) = \Lambda \beta^2.
$$
Fix $c>0$ and observe that our truncation $h^c$ of $h$ is then
$$
h^c(w,\beta)=
 \begin{cases}
\Lambda \beta^2  & {\mbox{if}}\ |\beta| \le
\frac{c}{2\Lambda},\\
c \beta -
\frac{c^2 }{4 \Lambda}, &{\mbox{else.}}
\end{cases}
$$
Thus the penalty in the $N$--step binomial model is given by
$$
g^{N,c}_n(\omega^N,\beta)=
\left\{
\begin{array}{ll}
\Lambda \beta^2 ,\qquad &{\mbox{if}}\ |\beta| \le
\frac{c}{2\Lambda\sqrt N},\\
\frac{c}{\sqrt N} |\beta| -
\frac{c^2 }{4 \Lambda N}, &{\mbox{else.}}
\end{array}
\right.
$$
Hence the marginal costs from trading slightly higher volumes are
linear for a small total trading volume and constant for a large one.
This example corresponds to the case where
$H_t(w,\beta)=\frac{\beta^2}{4\Lambda}$ and
$\hat H_t(w)\equiv\frac{1}{4\Lambda}$.
\end{enumerate}
\end{example}

\begin{remark}
  A sufficient condition for a limiting cost process $h$ to satisfy
  our Assumption~\ref{quadratic} is the joint validity of
 \begin{enumerate}
 \item[(i)] there exists $\epsilon>0$ such that for any
   $(t,w)\in [0,T]\times C^{+}[0,T]$, the second derivative
   $\frac{\partial^2 h}{\partial\beta^2}(t,w,\beta)$ exists for any
   $-\epsilon w(t)<\beta<\epsilon w(t)$, and is continuous at
   $(t,w,0)$, and

 \item[(ii)] we have
   \begin{equation*}
     \begin{split}
       \frac{\partial h}{\partial\beta}(t,w,0)\equiv 0, \ \ \mbox{and}
       \ \ \inf_{(t,w)\in [0,T]\times
         C^{+}[0,T]}\inf_{|\beta|<\epsilon w(t)} \frac{\partial^2
         h}{\partial\beta^2}(t,w,\beta)>0.
     \end{split}
   \end{equation*}
 \end{enumerate}
 Indeed, sufficiency of these conditions can be verified by use of a
 Taylor expansion.
\end{remark}

Under Assumption~\ref{quadratic} the scaling limit for the discrete-time
super-replication prices can be described as follows:

\begin{thm}\label{thm2.3}
  Suppose that Assumption~\ref{quadratic} holds and that
  $\sigmalow>0$. Furthermore assume  that
  $\F:C[0,1] \to \mathbb{R}_+$ is continuous with polynomial growth:
  $ 0 \leq \F(S) \leq C(1+\|S\|_{\infty})^p$, $S \in C[0,1]$, for some
  constants $C,p \geq 0$.

  Then we have
  \begin{equation*}
    \lim_{N\rightarrow\infty}V^{\sigmalow/\sqrt{N},\sigmahigh/\sqrt{N}}_{g^{N,c}}(\F)
= \sup_{\sigma \in \Sigma(c)} \mathbb{E}^W
         \left[\mathbb F(S^{\sigma})-\int_{0}^1\hat H_t(S^{\sigma})a^2(\sigma_t) dt\right]
  \end{equation*}
where
\begin{itemize}
\item $\mathbb{E}^W$ denotes the expectation with respect to
  $\mathbb{P}^W$, the Wiener measure on $(C[0,1],\cB(C[0,1]))$, for
  which the canonical process $W$ is a Brownian motion,
\item $\Sigma(c)$ is the class of processes $\sigma \geq 0$ on Wiener
  space which are progressively measurable with respect to the
  filtration generated by $W$ and such that
  \begin{equation*}
    a(\sigma_t) \leq c
  \end{equation*}
  for
  \begin{equation*}
    a(\sigma) \set
   \begin{cases}
\frac{1}{2} \frac{\sigmalow^2-\sigma^2}{\sigmalow}, & 0 \leq \sigma
\leq \sigmalow,\\
0, & \sigmalow \leq \sigma \leq \sigmahigh,\\
\frac{1}{2} \frac{\sigma^2-\sigmahigh^2}{\sigmahigh}, &
\sigmahigh \leq \sigma,
   \end{cases}
  \end{equation*}
\item $S^{\sigma}$ denotes the stock price evolution with
  volatility $\sigma$:
$$
S^{\sigma}_t \set s_0 \exp\left(\int_0^t \sigma_s \,dW_s - \frac{1}{2}
  \int_0^t \sigma^2_s \,ds\right), \quad 0 \leq t \leq 1.
$$
\end{itemize}
\end{thm}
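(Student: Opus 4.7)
The plan is to leverage the duality of Theorem~\ref{thm2.1}, which recasts the super-replication price as a supremum of penalized expectations, and then to analyze the $N\to\infty$ asymptotics of this supremum through functional weak convergence together with an extension of the martingale construction of~\cite{K}, Section~5. By Theorem~\ref{thm2.1},
\begin{equation*}
V^{\sigmalow/\sqrt N,\sigmahigh/\sqrt N}_{g^{N,c}}(\F)=\sup_{\mathbb{P}^N}\mathbb{E}_{\mathbb{P}^N}\!\left[\F(\overline S^N)-\sum_{n=0}^{N-1}G^{N,c}_n(\alpha^N_n)\right],
\end{equation*}
where $\alpha^N_n:=(\mathbb{E}_{\mathbb{P}^N}[S^N_N\,|\,\mathcal{F}_n]-S^N_n)/S^N_n$ and $G^{N,c}_n$ is the convex conjugate of $g^{N,c}_n$. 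The truncation forces the slope of $g^{N,c}_n$ to be at most $c/\sqrt N$, so $G^{N,c}_n\equiv+\infty$ outside $[-c/\sqrt N,c/\sqrt N]$; finiteness of the dual expression therefore confines attention to measures $\mathbb{P}^N$ with $|\alpha^N_n|\le c/\sqrt N$, the discrete counterpart of the constraint $a(\sigma_t)\le c$ defining $\Sigma(c)$. Within that window, the identity $\lim_N N\,H_{n/N}(w,\alpha/\sqrt N)=\hat H_t(w)\alpha^2$ from Assumption~\ref{quadratic} is the quadratic scaling that will convert the discrete penalty sum into the integral $\int_0^1\hat H_t(S^\sigma)a^2(\sigma_t)\,dt$ at the limit.

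For the upper bound ``$\le$'', I would take a $1/N$-optimal sequence $\mathbb{P}^N$ and study the joint laws on $C([0,1];\mathbb R^3)$ of the interpolated processes $\overline S^N$, $\overline M^N_{n/N}:=\mathbb{E}_{\mathbb{P}^N}[S^N_N\,|\,\mathcal{F}_n]$ and $\overline A^N_{n/N}:=\sum_{m\le n}(X^N_m)^2$. Since $|X^N_n|\le\sigmahigh/\sqrt N$ and $|\alpha^N_n|\le c/\sqrt N$, these processes carry uniform polynomial moment bounds and are tight. Along a weakly convergent subsequence extracted via Prokhorov and Skorokhod, $\overline M^N$ and $\overline S^N$ share the same continuous martingale limit (because $|S^N-M^N|=O(1/\sqrt N)$) with absolutely continuous quadratic variation; a martingale representation argument identifies this limit with $S^\sigma$ on Wiener space for some progressively measurable $\sigma\ge 0$, and the discrete drift constraint transfers to $a(\sigma_t)\le c$, placing $\sigma\in\Sigma(c)$. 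Continuity and polynomial growth of $\F$, together with Assumption~\ref{quadratic}, then permit the passage of both the payoff and the penalty through the limit.

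For the lower bound ``$\ge$'', given any $\sigma\in\Sigma(c)$ I would construct discrete measures $\mathbb{P}^N$ under which the dual functional converges to $\mathbb{E}^W[\F(S^\sigma)-\int_0^1\hat H_t(S^\sigma)a^2(\sigma_t)\,dt]$. Extending the binomial construction of~\cite{K} to our frictional and volatility-uncertain setting, I take log-returns of the form $X^N_n=\tilde\sigma^N_n\epsilon^N_n/\sqrt N$ with $\tilde\sigma^N_n\in\{\sigmalow,\sigmahigh\}$ tracking the desired continuous volatility profile and $\epsilon^N_n\in\{\pm 1\}$ whose conditional probabilities (and, whenever $\sigma_t$ escapes $[\sigmalow,\sigmahigh]$, short-range correlations between consecutive signs) are calibrated so that $\overline S^N$ converges weakly to $S^\sigma$ while $\sqrt N\,\alpha^N_n$ approaches $a(\sigma_{n/N})$. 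A functional central limit argument delivers the weak convergence, Assumption~\ref{quadratic} turns the penalty sum into $\int_0^1\hat H a^2\,dt$, and polynomial growth supplies the uniform integrability needed to pass expectations.

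The hard part is this lower-bound construction in the regime $\sigma_t\notin[\sigmalow,\sigmahigh]$. Standard i.i.d.\ binomial schemes cover the case $\sigma_t\in[\sigmalow,\sigmahigh]$ (Peng's $G$-expectation setting recovered in the frictionless limit), but outside that range the effective continuous volatility must be synthesized by correlating consecutive discrete increments in a Kusuoka-type manner \emph{while} simultaneously tuning the small discrete drift so that $\sqrt N\,\alpha^N_n$ converges to \emph{exactly} $a(\sigma_{n/N})$, not merely to an upper bound. Exhibiting this sharp matching for every $\sigma\in\Sigma(c)$, and showing that it is preserved under the joint passage to the limit of payoff and penalty, is the principal technical hurdle and explains why the truncation parameter $c$ on the discrete side transfers precisely to the cutoff $a(\sigma)\le c$ on the continuous side.
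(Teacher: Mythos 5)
Your overall strategy --- dualize via Theorem~\ref{thm2.1}, pass the near-optimal dual measures to a limit for the upper bound, and build Kusuoka-type discrete martingale measures for the lower bound --- is exactly the paper's, but two essential steps are missing, and you only acknowledge one of them. For the upper bound, the penalty enters with a negative sign, so you must prove a lower-semicontinuity statement of the form $\liminf_N \mathbb{E}_{\mathbb{Q}^N}\bigl[\sum_n G^{N,c}_n(\alpha^N_n)\bigr] \geq \mathbb{E}\bigl[\int_0^1 \hat H_t(M)\,a^2\bigl(\sqrt{d\langle \ln M\rangle_t/dt}\bigr)\,dt\bigr]$, and Assumption~\ref{quadratic} alone does not give this: the normalized drifts $\Delta^N_t=\sqrt N\,\alpha^N_{[Nt]}$ are merely bounded by $c$ and need not converge in any sense, and your tightness argument only controls $\overline S^N,\overline M^N$ and $\overline A^N=\sum_m (X^N_m)^2$. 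What the paper actually needs is (i) the \emph{corrected} quadratic-variation process $Q^N_n=\sum_m (X^N_m)^2+2\sum_m \frac{M^N_m-S^N_m}{S^N_m}X^N_m$ of~\eqref{eq:20} --- it is precisely the cross term, of size up to $2c\sigmahigh$ per unit time, that lets $d\langle \ln M\rangle_t/dt$ leave $[\sigmalow^2,\sigmahigh^2]$ and yields the constraint $a\le c$ on the limit; its identification goes through the discrete integrals $Y^N_n=\sum_m (M^N_m-M^N_{m-1})/S^N_{m-1}$ and the Duffie--Protter weak-convergence theorem, not through tightness of $\sum_m(X^N_m)^2$ alone --- and (ii) the pointwise inequality $(\Delta^N_t)^2\geq b\bigl(N\Delta \hat Q^N_{[Nt]/N}\bigr)$ with $b$ convex and $b(u)=a^2(\sqrt u)$ for $u\geq 0$ (Lemma~\ref{lem3.17}), followed by passing to $\hat{\mathbb{P}}\otimes dt$-a.e.\ convergent convex combinations (Lemma~A1.1 of~\cite{DS2}), convexity of $b$, and Fatou. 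Without a substitute for this argument, your claim that polynomial growth and Assumption~\ref{quadratic} ``permit the passage of the penalty through the limit'' is a genuine gap. (The identification of the limit with some $S^\sigma$, $\sigma\in\Sigma(c)$, is then handled by the reformulation of the right-hand side as a supremum over martingale laws with a quadratic-variation constraint, Lemma~\ref{lem:101}(i), rather than by a bare martingale representation argument.)

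For the lower bound you explicitly leave the construction open when $\sigma_t\notin[\sigmalow,\sigmahigh]$, and this is indeed the crux; note moreover that your scheme with $\tilde\sigma^N_n\in\{\sigmalow,\sigmahigh\}$ and correlated consecutive signs is not how the sharp matching is achieved. The paper first reduces (Lemma~\ref{lem:101}(ii)) to Lipschitz volatilities $\tilde\sigma$ with $\sigmalow(\sigmalow-2c)^{+}+\delta\leq\tilde\sigma^2\leq\sigmahigh(\sigmahigh+2c)-\delta$ and $\tilde\sigma\equiv\sigmalow$ on $[1-\delta,1]$, and then, in Lemma~\ref{lem:102}(ii), uses the clamped volatility $\sigma^N_n=\sigmalow\vee\tilde\sigma_{(n-1)/N}(\overline B^N)\wedge\sigmahigh$ together with a predictable tilt $\kappa^N_n=\frac12\bigl(\tilde\sigma^2_{(n-1)/N}(\overline B^N)/(\sigma^N_n)^2-1\bigr)$: one sets $M^N_n=S^N_n\exp(\kappa^N_n X^N_n)$ and chooses one-step probabilities $q^N_n$ so that the auxiliary walk $B^N$, and hence $M^N$, is a $\mathbb{Q}^N$-martingale; this yields exactly $\sqrt N\,|M^N_n-S^N_n|/S^N_n\to a(\tilde\sigma)$, and the condition $\tilde\sigma\equiv\sigmalow$ near $t=1$ forces $\kappa^N_N=0$, i.e.\ $M^N_N=S^N_N$, so that $M^N$ really is the conditional expectation of the terminal stock price as the dual formula requires. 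Your sketch contains neither the reduction to this regular class nor the terminal matching condition, and these are precisely what make the exact calibration you identify as the main hurdle feasible.
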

The proof of this result is deferred to Section~\ref{sec:4}. One way
to interpret Theorem~\ref{thm2.3} is that it describes the scaling
limit of super-replication prices as a convex risk measure for the
payoff; see~\cite{FS} or~\cite{FG}.  The class $\Sigma(c)$
parametrizes the volatility models which one chooses to take into
account and the integral term measures the relevance one associates
with the payoff's mean under any such model. From this perspective the
most relevant models are those where the volatility stays within the
prescribed uncertainty region $[\sigmalow,\sigmahigh]$ (so that the
integral term vanishes). One also considers models with higher or
lower volatilities $\sigma_t$, though, (for as long as $a(\sigma_t)\leq c$),
but diminishes their relevance according to the average difference of
their local variances from $\sigmalow^2$ on the low and $\sigmahigh^2$
on the high side as measured by $a(\sigma_t)$. In particular, the
scaling limit of our super-replication prices is bounded from below
by the $G$--expectation
$\sup_{\sigma \in [\sigmalow,\sigmahigh]} \mathbb{E}^W \left[\mathbb
  F(S^{\sigma})\right].$
In the frictionless case (where $h\equiv0$, $\hat H\equiv\infty$) the
penalty for choosing a volatility model with values outside the
interval $[\sigmalow,\sigmahigh]$ is infinite and we recover the
well-known frictionless characterization of super-replication prices
under uncertainty as the payoff's $G$-expectation.


\section{Proofs}\label{sec:3}

In this section we carry out the proofs of Theorems~\ref{thm2.1}
and~\ref{thm2.3}.

\subsection{Proof of Theorem \ref{thm2.1}}
\label{sec:discretedualityproof}

Theorem~\ref{thm2.1} is concerned with the identity $V(\F) = U(\F)$
where
\begin{equation*}\label{eq:P1}
  U(\F) \set \sup_{\mathbb P\in \mathcal{P}_{\sigmalow,\sigmahigh}}
  \mathbb E_{\mathbb P}\left[\F(S)-\sum_{n=0}^{N-1}  G_n\left(\frac{\mathbb{E}_{\mathbb  P}[S_N\;|\;\mathcal{F}_n]-S_n}{S_n}\right)  \right].
\end{equation*}

As a first step we note:
\begin{lem}\label{lem:1}
 For any measurable $\F:\mathbb{R}_+^{N+1} \to \mathbb{R}_+$ we have
 that $V(\F) \geq U(\F)$.
\end{lem}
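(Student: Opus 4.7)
The plan is to use the Fenchel--Young inequality together with a martingale identity, in the spirit of weak duality. If $V(\F)=\infty$ the inequality is trivial, so I assume $V(\F)<\infty$ and fix any $y>V(\F)$ along with a super-replicating strategy $\pi=(y,\gamma)\in\mathcal{A}(y)$, so that
\begin{equation*}
  y+\sum_{n=0}^{N-1}\gamma_n(S_{n+1}-S_n)-\sum_{n=0}^{N-1}g_n((\gamma_n-\gamma_{n-1})S_n)\ge \F(S)\qquad\text{on }\Omega.
\end{equation*}
Fix an arbitrary $\P\in\mathcal{P}_{\sigmalow,\sigmahigh}$ and introduce the $\P$-martingale $M_n\set\E_\P[S_N\mid\mathcal{F}_n]$ along with the $\mathcal{F}_n$-measurable random variable $\alpha_n\set(M_n-S_n)/S_n$. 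Since $S$ takes values in the compact set $[s_0 e^{-N\sigmahigh},s_0 e^{N\sigmahigh}]\subset(0,\infty)$, both $M$ and $\alpha$ are uniformly bounded on $\Omega$.

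The heart of the argument is to apply the Fenchel--Young inequality $g_n(\beta)+G_n(\alpha)\ge\alpha\beta$ with $\alpha=\alpha_n$ and $\beta=(\gamma_n-\gamma_{n-1})S_n$, which yields
\begin{equation*}
  g_n\bigl((\gamma_n-\gamma_{n-1})S_n\bigr)\ge(\gamma_n-\gamma_{n-1})(M_n-S_n)-G_n(\alpha_n).
\end{equation*}
Plugging this lower bound into the super-replication inequality and rearranging via Abel summation, the $\gamma$-terms that are not already trapped inside a $G_n$ collapse (using $M_N=S_N$) into the martingale-transform sum $\sum_{n=0}^{N-1}\gamma_n(M_{n+1}-M_n)$, leaving the pointwise bound
\begin{equation*}
  \F(S)\le y+\sum_{n=0}^{N-1}\gamma_n(M_{n+1}-M_n)+\sum_{n=0}^{N-1}G_n(\alpha_n).
\end{equation*}
Taking $\P$-expectation, the martingale-transform term has zero mean because $M$ is a $\P$-martingale and $\gamma_n$ is $\mathcal{F}_n$-measurable, so $\E_\P[\F(S)-\sum_{n}G_n(\alpha_n)]\le y$. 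Taking the supremum over $\P$ and then letting $y\downarrow V(\F)$ yields $U(\F)\le V(\F)$.

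The main technical obstacle will be the interchange of sum and expectation in the martingale-transform term: the strategy $\gamma$ is only required to be adapted, so $\gamma_n$ need not be $\P$-integrable for a general $\P$. I would handle this by a standard localization, stopping $\gamma$ at $\tau_K\set\inf\{n:|\gamma_n|>K\}\wedge N$ and passing $K\to\infty$ after applying the above computation on $\{\tau_K=N\}$. Crucially, the pointwise lower bound $\sum_n\gamma_n(M_{n+1}-M_n)\ge \F(S)-y-\sum_n G_n(\alpha_n)\ge -y-\sum_n G_n(\alpha_n)$ together with $\F,G_n\ge 0$ supplies the integrable minorant needed for the limit, and one may reduce at the outset to those $\P$ with $\E_\P[\sum_n G_n(\alpha_n)]<\infty$, since otherwise the dual contribution of $\P$ is $-\infty$ and the bound is trivial.
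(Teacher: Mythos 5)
Your argument is essentially the paper's own weak-duality proof: the super-replication inequality, an Abel-summation rearrangement using $M_N=S_N$, the Fenchel--Young inequality $\alpha\beta\le g_n(\beta)+G_n(\alpha)$ with $\alpha=(\mathbb{E}_{\mathbb P}[S_N\mid\mathcal F_n]-S_n)/S_n$, and the martingale property of $\mathbb{E}_{\mathbb P}[S_N\mid\mathcal F_n]$ under an arbitrary $\mathbb P\in\mathcal P_{\sigmalow,\sigmahigh}$; the paper merely organizes the computation by rewriting $\sum_n\gamma_n(S_{n+1}-S_n)=\sum_n(\gamma_n-\gamma_{n-1})(S_N-S_n)$ and taking conditional expectations term by term instead of keeping a martingale transform. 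Your extra localization discussion addresses an integrability point the paper passes over silently, and the claim it rests on (a discrete-time martingale transform whose terminal value is bounded below by an integrable variable has nonpositive expectation) is standard, so the proof is correct and takes the same route.
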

\begin{proof}
Let $\pi=(y,\gamma)$ super-replicate $\F(S)$. Then we have
\begin{align*}
\F(S) & \leq Y^\pi_N = y + \sum_{n=0}^{N-1}
\left( \gamma_n (S_{n+1}-S_n)
-g_k((\gamma_n-\gamma_{n-1})S_n)\right)\\
&=y + \sum_{n=0}^{N-1}
\left((\gamma_n-\gamma_{n-1}) (S_N-S_n) -g_n((\gamma_n-\gamma_{n-1})S_n)\right).
\end{align*}
Taking (conditional) expectations with respect to any $\mathbb{P} \in
\mathcal{P}_{\sigmalow,\sigmahigh}$ this shows that
\begin{align*}
\mathbb{E}_{\mathbb{P}} [\F(S)] &
 \leq
y + \mathbb{E}_{\mathbb{P}}\bigg[ \sum_{n=0}^{N-1}
 (\gamma_n-\gamma_{n-1})(\mathbb{E}_{\mathbb{P}}[S_N\;|\;\mathcal{F}_n]-S_n)-
g_n((\gamma_{n}-\gamma_{n-1})S_n)\bigg]\\
&=y + \mathbb{E}_{\mathbb{P}}\bigg[ \sum_{n=0}^{N-1}
 (\gamma_n-\gamma_{n-1})S_n\frac{\mathbb{E}_{\mathbb{P}}[S_N\;|\;\mathcal{F}_n]-S_n}{S_n}
-g_n((\gamma_{n}-\gamma_{n-1})S_n)\bigg]\\&\leq
y +\mathbb{E}_{\mathbb{P}}\left[ \sum_{n=0}^{N-1}
 G_n\left(\frac{\mathbb{E}_{\mathbb{P}}[S_N\;|\;\mathcal{F}_n]-S_n}{S_n}\right)\right]
\end{align*}
where the final estimate follows from the definition of the dual
functions $G_n$, $n=0,\dots,N$.

Since this holds for arbitrary $\mathbb{P}\in
\mathcal{P}_{\sigmalow,\sigmahigh}$ and any initial wealth $y$ for
which we can find a super-replicating strategy, the preceding
estimate yields $V(\F) \geq U(\F)$.
\end{proof}

We next observe that an identity analogous to $U(\F)=V(\F)$ holds for
multinomial models:
\begin{lem}\label{lem:2}
  For $k \in \{1,2,\dots\}$ consider the finite set
  \begin{equation*}
    \Omega^k \set
  \left\{x \in \Omega \;:\; |x_i| =\frac{j}{k} \sigmalow +
    \left(1-\frac{j}{k}\right)\sigmahigh \text{ for some }
      j \in \{0,\dots,k\}\right\}
  \end{equation*}
 and let
  $\mathcal{P}_{\sigmalow,\sigmahigh}^k$ be the subset of
  $\mathcal{P}_{\sigmalow,\sigmahigh}$ which contains those discrete
  probability measures that are supported by $\Omega^k$.

 Then we have
 \begin{equation*}
   V^k(\F) = U^k(\F)
 \end{equation*}
 where
 \begin{equation*}
   V^k(\F) = \inf\{y \;:\; Y^{(y,\gamma)}_N(\omega) \geq
   \F(S(\omega)) \, \forall \omega \in \Omega^k \text{ for some
     strategy } \gamma\}
 \end{equation*}
 and
 \begin{equation*}
     U^k(\F)=\sup_{\mathbb P\in \mathcal{P}^k_{\sigmalow,\sigmahigh}}
  \mathbb E_{\mathbb P}\left[\F(S)-\sum_{n=0}^{N-1}  G_n\left(\frac{\mathbb{E}_{\mathbb  P}[S_N\;|\;\mathcal{F}_n]-S_n}{S_n}\right)  \right].
 \end{equation*}
\end{lem}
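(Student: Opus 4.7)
The plan is to reduce Lemma~\ref{lem:2} to a standard finite-dimensional convex duality, in the spirit of the binomial duality proved in~\cite{DS1}. The crucial simplification is that $\Omega^k$ is a finite set, so the super-replication problem becomes an optimization in finitely many variables: the initial wealth $y$ together with the values $\gamma_n(\omega)$ of the holding process on the finitely many histories through time $n$. The constraints are the finite family of path-wise super-hedging inequalities, and the cost functions $g_n(\cdot)$ are convex in $\beta$ by Assumption~\ref{a.g}.

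First I would dispatch the inequality $V^k(\F)\geq U^k(\F)$: this is an exact repetition of the computation in Lemma~\ref{lem:1}, restricted to $\Omega^k$. Indeed, every $\mathbb{P}\in \mathcal{P}^k_{\sigmalow,\sigmahigh}$ is in particular an element of $\mathcal{P}_{\sigmalow,\sigmahigh}$, and a strategy that super-replicates on all of $\Omega$ certainly super-replicates on $\Omega^k$, so exactly the same chain of estimates (using the Legendre-Fenchel inequality $\alpha\beta-g_n(\beta)\leq G_n(\alpha)$) delivers the bound.

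For the non-trivial direction $V^k(\F)\leq U^k(\F)$, I would use Lagrangian duality. Fix $y<V^k(\F)$. By definition there is no adapted strategy $\gamma$ satisfying the super-hedging inequality at every $\omega\in\Omega^k$. Viewing the set of attainable terminal wealths as a convex subset of the finite-dimensional space $\mathbb{R}^{\Omega^k}$, I would separate $\F(S)$ from this set via a Hahn-Banach / Fenchel-Rockafellar argument, and normalize the separating functional to obtain a probability measure $\mathbb{P}^\ast$ supported on $\Omega^k$, i.e.\ an element of $\mathcal{P}^k_{\sigmalow,\sigmahigh}$. Against such a measure, the minimization over the increments $(\gamma_n-\gamma_{n-1})S_n$ decouples period by period and history by history, and the definition of $G_n$ as the convex conjugate of $g_n$ identifies the infimum over $\gamma$ as the penalty $-\sum_{n=0}^{N-1} G_n\bigl((\mathbb E_{\mathbb P^\ast}[S_N|\mathcal F_n]-S_n)/S_n\bigr)$. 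This yields
\begin{equation*}
y<\mathbb{E}_{\mathbb{P}^\ast}\!\left[\F(S)-\sum_{n=0}^{N-1} G_n\!\left(\frac{\mathbb{E}_{\mathbb{P}^\ast}[S_N\mid\mathcal{F}_n]-S_n}{S_n}\right)\right]\leq U^k(\F),
\end{equation*}
and letting $y\uparrow V^k(\F)$ closes the argument.

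The main technical obstacle is the separation step: one must check that the separating linear functional can be chosen strictly positive on $\mathbb{R}^{\Omega^k}_+$, so that normalization produces a bona fide probability measure assigning mass to every point of $\Omega^k$, and that no integrability issue prevents the exchange of infimum over $\gamma$ with the expectation. Both are routine in finite dimensions once one notes that the set of super-hedgeable claims is a convex set containing a translate of the nonnegative orthant (by liquidating through the trivial strategy and adding cash) and that $g_n$ is continuous and convex with $g_n(0)=0$.
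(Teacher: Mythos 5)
Your proposal is correct and follows essentially the same route as the paper, which simply observes that the identity is a finite-dimensional convex duality claim (citing Theorem~3.1 of \cite{DS1} for $k=1$ and noting that the same reasoning applies to the multinomial case $k>1$); your separation/Lagrangian argument with the easy direction handled as in Lemma~\ref{lem:1} is precisely what that citation packages. One minor remark: you do not need the separating functional to be strictly positive on $\mathbb{R}^{\Omega^k}_+$ --- nonnegativity suffices, since $\mathcal{P}^k_{\sigmalow,\sigmahigh}$ consists of measures supported by $\Omega^k$ without being required to charge every point of it.
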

\begin{proof}
  For $k=1$, i.e., in the binomial case, this is just Theorem~3.1 in
  \cite{DS1}. This result is proved by observing that the identity can
  be cast as a finite dimensional convex duality claim. The same
  reasoning actually applies to the multinomial setup with $k>1$ as
  well. This establishes our claim.
\end{proof}
In a third step we argue how to pass to the limit $k \uparrow \infty$,
first for continuous~$\F$:
\begin{lem}\label{lem:3}
With the notation of Lemma~\ref{lem:2} we have
\begin{equation}
  \label{eq:9new}
  U^k(\F) \leq U(\F), \quad k=1,2,\dots
\end{equation}
If $\F$ is continuous we have furthermore
\begin{equation}
  \label{eq:10new}
  \liminf_{k \uparrow \infty} V^k(\F) \geq V(\F).
\end{equation}
\end{lem}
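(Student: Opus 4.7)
The first bound \eqref{eq:9new} is immediate: since $\mathcal{P}^k_{\sigmalow,\sigmahigh} \subseteq \mathcal{P}_{\sigmalow,\sigmahigh}$, the supremum defining $U^k(\F)$ is taken over a smaller set than that defining $U(\F)$, so $U^k(\F)\le U(\F)$ follows by monotonicity of suprema.

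For \eqref{eq:10new} the strategy is to lift a near-optimal strategy on $\Omega^k$ to an approximately super-replicating one on $\Omega$. I would first introduce a coordinate-wise rounding map $\psi^k \colon \Omega \to \Omega^k$ that sends $\omega=(x_1,\dots,x_N)$ to the componentwise nearest admissible point of $\Omega^k$ (keeping the sign of each $x_i$ fixed); this yields $\|\omega-\psi^k(\omega)\|_\infty \le (\sigmahigh-\sigmalow)/(2k)$ uniformly on $\Omega$, and because $\psi^k$ acts coordinate-wise the pullback $\tilde\gamma^k_n(\omega) \set \gamma^k_n(\psi^k(\omega))$ of any $(\mathcal{F}_n)$-adapted $\gamma^k$ on $\Omega^k$ remains $(\mathcal{F}_n)$-adapted on $\Omega$. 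Fix $\epsilon>0$ and pick $(y^k,\gamma^k)$ super-replicating $\F(S)$ on $\Omega^k$ with $y^k \le V^k(\F)+\epsilon$; the aim is to show that $(y^k+o(1),\tilde\gamma^k)$ super-replicates $\F(S)$ on all of $\Omega$.

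Controlling the defect $\F(S(\omega)) - Y^{(y^k,\tilde\gamma^k)}_N(\omega)$ then reduces to three pieces: the payoff variation $\F(S(\omega))-\F(S(\psi^k(\omega)))$, the known slack $\F(S(\psi^k(\omega))) - Y^{(y^k,\gamma^k)}_N(\psi^k(\omega))\le 0$, and the wealth discrepancy $Y^{(y^k,\gamma^k)}_N(\psi^k(\omega)) - Y^{(y^k,\tilde\gamma^k)}_N(\omega)$. The first vanishes uniformly by continuity of $\F$ on the compact set of attainable stock paths. The wealth discrepancy decomposes further into a trading-gain difference $\sum_n \gamma^k_n(\psi^k(\omega))\bigl[(S_{n+1}-S_n)(\psi^k(\omega))-(S_{n+1}-S_n)(\omega)\bigr]$ and a transaction-cost difference; since $\omega\mapsto S(\omega)$ is Lipschitz with a constant depending only on $\sigmahigh$ and $s_0$, and since $g_n(\cdot,\beta)$ is continuous in $\omega$ while $g_n(\omega,\cdot)$ is convex (hence locally Lipschitz) in $\beta$, both of these terms tend to zero uniformly on $\Omega$ provided the strategies $(\gamma^k)$ lie in a uniformly bounded range.

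The main obstacle is precisely this uniform boundedness, since a priori the optimal positions could grow with $k$ even though $\F$ itself is bounded on the compact set of attainable stock paths. To get around this I would argue that in the definition of $V^k(\F)$ one may restrict without loss of generality to strategies with $|\gamma^k_n|\le M$ for some $M$ independent of $k$: the trivial strategy $\gamma\equiv 0$ already caps $V^k(\F)\le \sup_S \F(S)$, so clipping large positions at a sufficiently high threshold $M$ can only worsen super-replication by an amount that vanishes as $M\to\infty$ (one uses convexity of $g_n$ together with the nonnegativity of $\F$ to bound the worst-case wealth drop induced by the clipping). Granting this, the lifted strategy satisfies $Y^{(y^k,\tilde\gamma^k)}_N(\omega) \ge \F(S(\omega)) - \eta_k$ uniformly on $\Omega$ for some $\eta_k \to 0$, whence $V(\F)\le V^k(\F)+\epsilon+\eta_k$, and taking $\liminf_{k\to\infty}$ followed by $\epsilon \downarrow 0$ yields \eqref{eq:10new}.
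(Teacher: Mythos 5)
Your proof of \eqref{eq:9new} and your overall scheme for \eqref{eq:10new} --- round $\omega$ to a grid point of $\Omega^k$, pull back a near-optimal strategy along this (coordinate-wise, hence adaptedness-preserving) rounding, and control the super-replication defect through the moduli of continuity of $\F$, $S$ and $g$, \emph{given} a bound on the positions that is uniform in $k$ --- coincide with the paper's. The genuine gap is in the one step you yourself flag as the main obstacle: the uniform bound $|\gamma^k_n|\le M$. Your proposed fix, namely that one may clip positions at a threshold $M$ because ``clipping can only worsen super-replication by an amount that vanishes as $M\to\infty$'' (invoking convexity of $g_n$ and $\F\ge 0$), is asserted rather than proved, and as stated it is circular: clipping a position from, say, $2M$ to $M$ can change the next-period trading gain by an amount of order $M s_0(e^{\sigmahigh}-1)$, which does not vanish as $M\to\infty$. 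To know that such clippings are harmless you would need to know that positions of that size never occur in near-optimal super-replicating strategies --- which is exactly the uniform bound you are trying to establish. The observation that $V^k(\F)\le\sup\F$ only bounds the initial capital, not the positions.

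What closes this gap in the paper is an a priori estimate valid for \emph{every} strategy super-replicating a nonnegative claim on $\Omega^k$ from initial capital $y^k\le A$: since the terminal wealth dominates $\F\ge 0$ and since for any adapted strategy the scenario can be chosen adversarially (an extreme down move $x_{n+1}=-\sigmahigh$ when the position is nonnegative, an extreme up move $x_{n+1}=+\sigmahigh$ otherwise; note $\pm\sigmahigh$ lies in the grid for every $k$), the wealth must stay nonnegative at every node. Requiring nonnegativity of the next-period wealth under both extreme moves gives $|\tilde\gamma^k_n|\le Y^{\tilde\pi^k}_n/\bigl((1-e^{-\sigmahigh})S_n\bigr)$, and induction then yields $Y^{\tilde\pi^k}_n\le A(1+e^{\sigmahigh})^n$, hence $|\tilde\gamma^k_n|\le A(1+e^{\sigmahigh})^N/\bigl((1-e^{-\sigmahigh})s_0e^{-\sigmahigh N}\bigr)$, a constant $\mathcal{C}$ independent of $k$ and of the scenario. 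With this estimate in place (and with the modulus of continuity of $g_n$ taken over the compact set $\Omega\times[-2\mathcal{C}s_0e^{\sigmahigh n},2\mathcal{C}s_0e^{\sigmahigh n}]$ rather than a pointwise local Lipschitz bound), the rest of your argument goes through as you describe.
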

\begin{proof}
  Estimate~\eqref{eq:9new} is immediate from the definitions of
  $U^k(\F)$ and $U(\F)$ as $\mathcal{P}_{\sigmalow,\sigmahigh}^k
  \subset \mathcal{P}_{\sigmalow,\sigmahigh}$.

  To prove~\eqref{eq:10new} take, for $k=1,2,\dots$, a strategy
  $\tilde{\gamma}^k$ such that $Y^{V^k(\F)+1/k,\tilde{\gamma}^k}_N
  \geq \F(S)$ on $\Omega^k$.  We will show below that without loss of
  generality we can assume that the sequence of $\tilde{\gamma}^k$s is
  uniformly bounded, i.e.,
  \begin{equation}
    \label{eq:11}
  |\tilde{\gamma}^k| \leq \mathcal C \text{ for some
  constant } \mathcal C>0.
  \end{equation}
  Now consider the strategies $\gamma^k \set \tilde{\gamma}^k \circ
  p^k$ where $p^k:\Omega \to \Omega^k$ is the projection which maps
  $\omega=(x_1,\dots,x_N)$ to
  $p^k(\omega)=\tilde{\omega} \set (\tilde{x}_1,\dots,\tilde{x}_N) \in \Omega^k$ with
  \begin{equation*}
   \tilde{x}_i \set \max\left\{x \leq x_i \;:\; |x|=\frac{j}{k} \sigmalow +
    \left(1-\frac{j}{k}\right)\sigmahigh \text{ for some }
      j \in \{0,\dots,k\}\right\}.
  \end{equation*}
  For any initial capital $y$, we get
  \begin{align*}
    &Y^{y,\gamma^k}_N (\omega)-Y^{y,\tilde{\gamma}^k}_N (\tilde{\omega})=\\
    &\sum_{n=0}^{N-1}
    \tilde{\gamma}^k_n(\tilde{\omega})\left((S_{n+1}(\omega)-S_n(\omega))
      -(S_{n+1}(\tilde{\omega})-S_n(\tilde{\omega}))\right)\\
&- \sum_{n=0}^{N-1}
\left(g_n(\omega,(\tilde{\gamma}^k_n(\tilde{\omega})-\tilde{\gamma}^k_{n-1}(\tilde{\omega}))S_n(\omega))-
g_n(\tilde{\omega},(\tilde{\gamma}^k_n(\tilde{\omega})-\tilde{\gamma}^k_{n-1}(\tilde{\omega}))S_n(\tilde{\omega}))\right).
  \end{align*}
  Because $|\tilde{\gamma}^k|\leq \mathcal C$ uniformly in $k=1,2,\dots$, the
  first of these two sums has absolute value less than
  \begin{align*}
         2\mathcal C \sum_{n=0}^{N-1} w(S_n,|\omega-\tilde{\omega}|)
  \end{align*}
  where for any function $f$ we let $w(f,\delta)$, $\delta >0$, denote
  the modulus of continuity over its domain. Similarly, we get for the
  second sum that its absolute value does not exceed
  \begin{equation*}
    \sum_{n=0}^{N-1} w(g_n \mbox{}_{|_{\Omega \times [-2\mathcal C s_0e^{\sigmahigh n},2\mathcal C s_0e^{\sigmahigh n}]}} ,|\omega-\tilde{\omega}|+2\mathcal C
    w(S_n,|\omega-\tilde{\omega}|)).
  \end{equation*}
  By continuity of $S$ and $g_n$, $n=0,\dots,N-1$, both of these
  bounds tend to 0 as $|\omega-\tilde{\omega}| \to 0$. It follows that
  there are $\epsilon_k \downarrow 0$ as $k \uparrow \infty$ such that
  $$
    Y^{y,\tilde{\gamma}^k}_N(\tilde{\omega}) \leq
    Y^{y,\gamma^k}_N(\omega)+\epsilon_k \text{ for all }
    |\omega-\tilde{\omega}| \leq 1/k, y \in \mathbb{R}.
  $$
  By assumption $\F$ is also continuous and so we get
  \begin{align*}
    \F(S)(\omega)& \leq\F(S)(\tilde{\omega})+w(\F(S),|\omega-\tilde{\omega}|) \\
    & \leq Y^{V^k(\F)+1/k,\tilde{\gamma}^k}_N(\tilde{\omega})+w(\F(S),|\omega-\tilde{\omega}|)\\
    & \leq Y^{V^k(\F)+1/k,\gamma^k}_N(\omega)+\epsilon_k +w(\F(S),1/k)
  \end{align*}
  It follows that $V(\F) \leq V^k(\F)+1/k+\epsilon_k+w(\F(S),1/k)$ which implies our
  claim~\eqref{eq:10new}.

  It remains to prove the uniform boundedness~\eqref{eq:11} of the
  sequence $(\tilde{\gamma}^k)_{k=1,2,\dots}$. Clearly, $y^k \set
  V^k(\F)+1/k \leq A$, $k=1,2,\dots$, for some $A>0$.  For any
  $\tilde{\pi}^k=(y^k,\tilde{\gamma}^k)$, $k=1,2,\dots$, we
  will prove by induction over $n$ that
  \begin{equation}\label{4.3}
    Y^{\tilde{\pi}^k}_n(\tilde{\omega})\leq
    A\left(1+e^{\sigmahigh}\right)^{n}
    \text{ and }
    |\tilde{\gamma}^k_n(\tilde{\omega})|\leq \frac{A\left(1+e^{\sigmahigh}\right)^{n}}{(1-e^{-\sigmahigh})S_n(\tilde{\omega})}
  \end{equation}
  for any $\tilde{\omega}=(\tilde{x}_1,\dots,\tilde{x}_N)\in \Omega^k$
  and $n=0,1,...,N$. Since $S_n\geq s_0e^{-\sigmahigh N}$, our
  claim~(3.4) then holds for $\mathcal C \set
  {A\left(1+e^{\sigmahigh}\right)^{N}}/((1-e^{-\sigmahigh})s_0e^{-\sigmahigh
    N})$.

  Since each $\tilde{\pi}^k$ super-replicates a positive
  claim,  we must have $Y^{\tilde{\pi}^k}_1 \geq 0$ in any
  possible scenario. In particular, we have
  $$
  y^k+\tilde{\gamma}^k_0 s_0  (e^{\sigmahigh}-1)\geq 0 \text{ and }
  y^k+\tilde{\gamma}^k_0 s_0  (e^{-\sigmahigh}-1)\geq 0
 $$
 which allows us to conclude that $|\tilde{\gamma}^k_0|\leq
 \frac{A}{s_0(1-e^{-\sigmahigh})}$. Thus~\eqref{4.3} holds for $n=0$.
 Next, assume that~(\ref{4.3}) holds for $n$ and let us prove it for
 $n+1$.  From the induction assumption we get
\begin{align*}
Y^{\tilde{\pi}^k}_{n+1}(\tilde{\omega})
&\leq Y^{\tilde{\pi}^k}_{n}(\tilde{\omega})+
\tilde{\gamma}^k_n(\tilde{\omega})(S_{n+1} (\tilde{\omega})-
S_n(\tilde{\omega}))\\
&\leq A\left(1+e^{\sigmahigh}\right)^{n}+
\frac{A\left(1+e^{\sigmahigh}\right)^{n}}{(1-e^{-\sigmahigh})S_n(\tilde{\omega})}S_n(\tilde{\omega})
(e^{\sigmahigh}-1)
= A\left(1+e^{\sigmahigh}\right)^{n+1},
\end{align*}
as required. Again, the portfolio valued at time $n+2$ should be non
negative, for any possible scenario.  Thus,
$$
Y^{\tilde{\pi}^k}_{n+1}(\tilde{\omega})+\tilde{\gamma}^k_{n+1}(\tilde{\omega})S_{n+1}(\tilde{\omega})
(e^{\sigmahigh}-1)\geq 0$$
and
$$Y^{\tilde{\pi}^k}_{n+1}(\tilde{\omega})+\tilde{\gamma}^k_{n+1}(\tilde{\omega})
S_{n+1}(\tilde{\omega}) (e^{-\sigmahigh}-1)\geq 0
$$
 and so,
 $$
|\tilde{\gamma}^k_{n+1}(\tilde{\omega})|
\leq \frac{Y^{\tilde{\pi}^k}_{n+1}(\tilde{\omega})}{(1-e^{-\sigmahigh})S_{n+1}(\tilde{\omega})}
\leq
\frac{A\left(1+e^{\sigmahigh}\right)^{n+1}}{(1-e^{-\sigmahigh})S_{n+1}(\tilde{\omega})}.
$$
This completes the proof of~\eqref{4.3}.
\end{proof}

It is immediate from Lemmas~\ref{lem:1}--Lemma~\ref{lem:3} that
$V(\F)=U(\F)$ for continuous functions $\F$. For upper-semicontinuous
$\F$ we can find continuous functions $\F^k$ with $\sup_\Omega \F(S)
\geq \F^k(S) \geq \F(S)$ such that
$$\limsup_{k \uparrow \infty,
  \omega_k \to \omega} \F^k(S(\omega_k)) \leq \F(S(\omega))$$ for any
$\omega \in \Omega$; see, e.g., Lemma~5.3 in \cite{DS}.  The proof of
Theorem~\ref{thm2.1} will thus follow from the series of inequalities
\begin{equation*}
  V(\F) \geq U(\F) \geq \limsup_{k \uparrow \infty} U(\F^k) = \limsup_{k \uparrow \infty} V(\F^k) \geq V(\F)
\end{equation*}
where the first inequality is due to Lemma~\ref{lem:1}, the last holds
because $\F^k \geq \F$ and the identity follows because our claim is
already established for continuous $\F^k$. Hence, the only estimate
still to be shown is the second one:
\begin{lem}\label{lem:4}
  If $\F$ is approximated by $\F^k$, $k=1,2\dots$, as above we have
\begin{equation*}\label{eq:12}
U(\F) \geq \limsup_{k \uparrow \infty} U(\F^k).
\end{equation*}
\end{lem}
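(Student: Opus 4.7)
The plan is to use compactness of $\Omega=\Omega_{\sigmalow,\sigmahigh}$ (a bounded closed subset of $\mathbb{R}^N$, hence a compact Polish space), extract a weakly convergent subsequence of near-optimizers for $U(\F^k)$, and establish separately an upper bound for the payoff term and a lower bound for the penalty term in the definition of $U$. For each $k$ I first choose $\mathbb{P}^k\in\mathcal{P}_{\sigmalow,\sigmahigh}$ with
$$U(\F^k)\leq \mathbb{E}_{\mathbb{P}^k}[\F^k(S)]-\sum_{n=0}^{N-1}\mathbb{E}_{\mathbb{P}^k}\bigl[G_n(\xi_n^{\mathbb{P}^k})\bigr]+\frac{1}{k},\qquad \xi_n^{\mathbb{P}}\set \frac{\mathbb{E}_{\mathbb{P}}[S_N\,|\,\mathcal{F}_n]-S_n}{S_n}.$$
By Prokhorov's theorem $\mathcal{P}_{\sigmalow,\sigmahigh}$ is weakly compact, so along a (non-relabelled) subsequence $\mathbb{P}^k\to\mathbb{P}^*$ weakly for some $\mathbb{P}^*\in\mathcal{P}_{\sigmalow,\sigmahigh}$.

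To handle the payoff term, I would invoke Skorokhod's representation to realize the $\mathbb{P}^k$ and $\mathbb{P}^*$ as laws of random elements $\omega^k\to\omega^*$ almost surely on an auxiliary probability space. The $\F^k\circ S$ are uniformly bounded above by the given $\sup_\Omega\F(S)<\infty$, so the hypothesis $\limsup_{k,\,\omega_k\to\omega}\F^k(S(\omega_k))\leq\F(S(\omega))$ combined with the reverse Fatou lemma yields
$$\limsup_k \mathbb{E}_{\mathbb{P}^k}[\F^k(S)]\leq \mathbb{E}_{\mathbb{P}^*}[\F(S)].$$

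The main obstacle is the lower semicontinuity of the penalty, since $\xi_n^{\mathbb{P}}$ depends nonlinearly on $\mathbb{P}$ through a conditional expectation. To bypass this, I would use a variational rewrite: the tower property gives $\mathbb{E}_{\mathbb{P}}[\beta\,\xi_n^{\mathbb{P}}]=\mathbb{E}_{\mathbb{P}}[\beta(S_N-S_n)/S_n]$ for any bounded $\mathcal{F}_n$-measurable $\beta$, and pointwise Fenchel duality (together with measurable selection of the pointwise optimizer, which is $\mathcal{F}_n$-measurable because $g_n$ is $\mathcal{F}_n$-adapted) yields
$$\mathbb{E}_{\mathbb{P}}\bigl[G_n(\xi_n^{\mathbb{P}})\bigr]=\sup_{\beta\in L^\infty(\mathcal{F}_n)}\mathbb{E}_{\mathbb{P}}\left[\beta\,\frac{S_N-S_n}{S_n}-g_n(\beta)\right].$$
Restricting the supremum to $\beta$ that are continuous bounded functions of $(X_1,\dots,X_n)$ loses nothing by a Lusin-type density argument (uniformly bounded continuous approximations together with dominated convergence and continuity of $g_n(\omega,\cdot)$). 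For each such $\beta$ the integrand is a continuous bounded function on the compact set $\Omega$ (joint continuity of $g_n$ in $(\omega,\beta)$ follows from its separate continuity in $\omega$ and convexity in $\beta$ via uniform convergence of convex functions on compact sets), so $\mathbb{P}\mapsto \mathbb{E}_{\mathbb{P}}[\,\cdot\,]$ is weakly continuous and $\mathbb{P}\mapsto\mathbb{E}_{\mathbb{P}}[G_n(\xi_n^{\mathbb{P}})]$ is weakly lower semicontinuous as a supremum of weakly continuous functionals. Hence
$$\sum_{n=0}^{N-1}\mathbb{E}_{\mathbb{P}^*}\bigl[G_n(\xi_n^{\mathbb{P}^*})\bigr]\leq \liminf_k \sum_{n=0}^{N-1}\mathbb{E}_{\mathbb{P}^k}\bigl[G_n(\xi_n^{\mathbb{P}^k})\bigr],$$
and combining with the payoff bound and the near-optimality of $\mathbb{P}^k$ gives
$$\limsup_k U(\F^k)\leq \mathbb{E}_{\mathbb{P}^*}[\F(S)]-\sum_{n=0}^{N-1}\mathbb{E}_{\mathbb{P}^*}\bigl[G_n(\xi_n^{\mathbb{P}^*})\bigr]\leq U(\F),$$
which is the required inequality.
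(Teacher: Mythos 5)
Your argument is correct, but it takes a genuinely different route from the paper. The paper keeps the penalty term nonlinear in $\mathbb{P}$ and instead augments the state space: it considers the joint law of $Z^k=(X,\mathbb{E}_{\mathbb{P}^k}[S_N],\mathbb{E}_{\mathbb{P}^k}[S_N\,|\,\mathcal{F}_1],\dots,\mathbb{E}_{\mathbb{P}^k}[S_N\,|\,\mathcal{F}_{N-1}])$ on the compact set $\Omega\times[0,s_0e^{\sigmahigh N}]^N$, extracts a limit $(\hat X^\infty,Y_0,\dots,Y_{N-1})$ via Prohorov and Skorohod, proves by testing against continuous bounded $f(\hat X^\infty_1,\dots,\hat X^\infty_n)$ that $\mathbb{E}[Y_n\,|\,\hat X^\infty_1,\dots,\hat X^\infty_n]=\mathbb{E}[S_N(\hat X^\infty)\,|\,\hat X^\infty_1,\dots,\hat X^\infty_n]$, and then uses Fatou together with lower semicontinuity of $G_n$ and a Jensen step (convexity and adaptedness of $G_n$) to replace the limiting $Y_n$ by the conditional expectation under the limit law. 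You instead linearize the penalty by its Fenchel-dual variational representation over bounded $\mathcal{F}_n$-measurable integrands, then restrict to continuous test integrands (Lusin plus dominated convergence at the fixed limit measure, and measurable selection for the truncated suprema), so that the penalty becomes a supremum of weakly continuous functionals of $\mathbb{P}$ and hence weakly lower semicontinuous; only the payoff term requires Skorohod and reverse Fatou, exactly as in the paper. What your route buys is that you never need the augmented-space limit identification nor the explicit Jensen argument, and lower semicontinuity of $G_n$ comes for free from the dual representation; the price is the measurable-selection/Lusin machinery and the joint continuity of $g_n$ in $(\omega,\beta)$, which you correctly note follows from separate continuity plus convexity (and is needed to make the test functionals weakly continuous and to justify dominated convergence). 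Both arguments hinge on the compactness of $\Omega$, the finiteness of $\sup_\Omega\F(S)$, and the defining property $\limsup_{k,\,\omega_k\to\omega}\F^k(S(\omega_k))\leq\F(S(\omega))$ of the approximating sequence, so the two proofs are of comparable length and generality; yours is a clean alternative that trades probabilistic limit identification for convex-duality and density arguments.
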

\begin{proof}
 Without loss of generality we can assume that $(U(\F^k))_{k=1,2,\dots}$ converges in
 $\mathbb{R}$. By definition of $U(\F^k)$ there is $\mathbb{P}^k \in
 \cP_{\sigmalow,\sigmahigh}$ such that, for $k=1,2,\dots$,
 \begin{align}\label{eq:13}
   U(\F^k) -\frac1k\leq  \mathbb{E}_{\mathbb{P}^k}\left[\F^k(S)-\sum_{n=0}^{N-1}
     G_n\left(\frac{\mathbb{E}_{\mathbb{P}^k}[S_N\;|\;\mathcal{F}_n]-S_n}{S_n}\right)
   \right].
 \end{align}
 We wish to show that the $\limsup$ of the right side of~\eqref{eq:13}
 as $k \uparrow \infty$ is not larger than $U(\F)$.  To this end,
 denote by $\Pi$ the set of Borel probability measures on
 $\Omega\times [0,s_0e^{\sigmahigh N}]^N$.  Since $\Omega\times
 [0,s_0e^{\sigmahigh N}]^N$ is compact, so is $\Pi$ when endowed with
 the weak topology. Now consider the sequence of probabilities
 measures in $\Pi$ obtained by considering the law of
 \begin{align*}
   Z^k\set(X,\mathbb{E}_{\mathbb{P}^k}[S_N],\mathbb{E}_{\mathbb{P}^k}[S_N\;|\;\mathcal{F}_1],...,\mathbb{E}_{\mathbb{P}^k}[S_N\;|\;\mathcal{F}_{N-1}])
 \end{align*}
 under $\mathbb{P}^k$ for $k=1,2,\dots$. Due to Prohorov's theorem, by
 possibly passing to a subsequence, we can assume without loss of
 generality that this sequence converges weakly. By Skorohod's
 representation theorem there thus exists a probability space
 $(\hat{\Omega},\hat{\mathcal{F}},\hat{\mathbb{P}})$ with a
 $\hat{\mathbb{P}}$-almost surely convergent sequence of random
 variables $\hat{Z}^k$, $k=1,2,\dots$, taking values in $\Omega \times
 [0,s_0e^{\sigmahigh N}]^N$, whose laws under $\hat{\mathbb{P}}$ coincide, respectively,
 with those of $Z^k$ under $\mathbb{P}^k$, $k=1,2,\dots$.
 Let $\hat{Z}^\infty$ denote the $\hat{\mathbb{P}}$-a.s. existing limit of $\hat{Z}^k$,
 $k=1,2,\dots$, and write it as
 \begin{align*}
  \hat{Z}^\infty=(\hat{X}^\infty,Y_0,...,Y_{N-1}).
 \end{align*}
 We will show that
 \begin{align}\label{eq:14}
 \mathbb{E}_{\hat{\mathbb{P}}}[ S_N(\hat X^{\infty})|\hat{X}^\infty_1,...,\hat{X}^{\infty}_n]=
 \mathbb{E}_{\hat{\mathbb{P}}}[Y_n|\hat{X}^\infty_1,...,\hat{X}^{\infty}_n], \quad n=0,\dots, N.
 \end{align}
 By construction of $\hat{Z}^k$ we have for the right side
 of~\eqref{eq:13}:
 \begin{align*}
   \mathbb{E}_{\mathbb{P}^k}&\left[\F^k(S)-\sum_{n=0}^{N-1}
     G_n\left(X,\frac{\mathbb{E}_{\mathbb{P}^k}[S_N\;|\;\mathcal{F}_n]-S_n}{S_n}\right)
   \right]\\
&=\mathbb{E}_{\hat{\mathbb{P}}}\left[\F^k(S(\hat{X}^k))-\sum_{n=0}^{N-1}
     G_n\left(\hat{X}^k,\frac{\mathbb{E}_{\hat{\mathbb{P}}}[S_N(\hat{X}^k)|\hat{X}^k_1,...,\hat{X}^k_n]-S_n(\hat{X}^k)}{S_n(\hat{X}^k)}\right)
   \right].
 \end{align*}
 The $\hat{\mathbb{P}}$-a.s. convergence of $\hat{Z}^k$ and the
 construction of the sequence of $\F^k$ imply that the $\limsup_{k
   \uparrow \infty}$ of the term inside this last expectation is
 $\hat{\mathbb{P}}$-a.s. not larger than
\begin{eqnarray*}
\F(S(\hat{X}^\infty))-\sum_{n=0}^{N-1}
   G_n\left(\hat{X}^\infty,\frac{Y_n-S_n(\hat{X}^\infty)}{S_n(\hat{X}^\infty)}\right)
 \end{eqnarray*}
 where we used the lower semi-continuity of $G_n$.  Because of the
 boundedness of $\F$ on compact sets and because $G \geq 0$, it then follows by
 Fatou's lemma that the $\limsup$ of the right side of~\eqref{eq:13}
 is not larger than
 \begin{align*}
   \mathbb{E}_{\hat{\mathbb{P}}}\left[\F(S(\hat{X}^\infty))-\sum_{n=0}^{N-1}
   G_n\left(\hat{X}^\infty,\frac{Y_n-S_n(\hat{X}^\infty)}{S_n(\hat{X}^\infty)}\right)\right].
   \end{align*}
 From the definitions it follows that $G_n(\omega,\alpha)$ is adapted
 and $G_n(\omega,\cdot)$ is convex. This together with
 the Jensen inequality and (\ref{eq:14}) yields that for any $n<N$
 \begin{eqnarray*}
&\mathbb{E}_{\hat{\mathbb{P}}}\left[G_n\left(\hat{X}^\infty,\frac{Y_n-S_n(\hat{X}^\infty)}{S_n(\hat{X}^\infty)}\right)\bigg|\hat{X}^{\infty}_1,...,\hat{X}^{\infty}_n\right]\geq\\
&G_n\left(\hat{X}^\infty,\frac{ \mathbb{E}_{\hat{\mathbb{P}}}[S_N(\hat{X}^\infty)|\hat{X}^{\infty}_1,...,\hat{X}^{\infty}_n]-S_n(\hat{X}^\infty)}{S_n(\hat{X}^\infty)}\right).
 \end{eqnarray*}
We conclude that  the $\limsup$ of the right side of~\eqref{eq:13}
 is not larger than
 \begin{align*}
  \mathbb{E}_{\hat{\mathbb{P}}}\left[\F(S(\hat{X}^\infty))-\sum_{n=0}^{N-1}
   G_n\left(\hat{X}^\infty,\frac{\mathbb{E}_{\hat{\mathbb{P}}}[S_N(\hat{X}^\infty)|\hat{X}^{\infty}_1,...,\hat{X}^{\infty}_n]-S_n(\hat{X}^\infty)}{S_n(\hat{X}^\infty)}\right)\right].
 \end{align*}
 Since the distribution of $\hat{X}^\infty$ is an element in
 $ \mathcal{P}_{\sigmalow,\sigmahigh}$, this last expectation is not
 larger than $U(\F)$ as we had to show.

 It remains to establish~\eqref{eq:14}.  Let $n<N$ and let
 $f:\mathbb{R}^{n}\rightarrow\mathbb{R}$ be a continuous bounded
 function.  From the dominated convergence theorem it follows that
\begin{align*}
 \mathbb{E}_{\hat{\mathbb{P}}}[Y_n f(\hat X^{\infty}_1,...,\hat X^{\infty}_n)]=
&\lim_{k\rightarrow\infty}
 \mathbb{E}_{\hat{\mathbb{P}}}\left[\mathbb{E}_{\hat{\mathbb{P}}}[S_N(\hat X^{k})|\hat{X}^{k}_1,...,\hat{X}^{k}_n]
 f(\hat X^{k}_1,...,\hat X^{k}_n)\right]\\=
&\lim_{k\rightarrow\infty} \mathbb{E}_{\hat{\mathbb{P}}}[S_N(\hat X^{k})
 f(\hat X^{k}_1,...,\hat X^{k}_n)]\\=&
 \mathbb{E}_{\hat{\mathbb{P}}}[S_N(\hat X^{\infty}) f(\hat X^{\infty}_1,...,\hat X^{\infty}_n)].
\end{align*}
Thus by applying standard density arguments
we obtain (\ref{eq:14}). This accomplishes our proof.
\end{proof}


\subsection{Proof of Theorem \ref{thm2.3}}
\label{sec:4}

For the proof of the asserted limit
\begin{equation}
  \label{eq:15}
  \lim_{N\rightarrow\infty}V^{\sigmalow/\sqrt{N},\sigmahigh/\sqrt{N}}_{g^{N,c}}(\F)
  = \sup_{\sigma \in \Sigma(c)} \mathbb{E}^W
  \left[\mathbb F(S^{\sigma})-\int_{0}^1\hat H_t(S^{\sigma})a^2(\sigma_t) dt\right]
\end{equation}
we first have to go through some technical preparations in
Section~\ref{sec:technical} before we can establish `$\leq$' and then
`$\geq$' in~\eqref{eq:15} in Sections~\ref{sec:leq} and~\ref{sec:geq},
respectively.

\subsubsection{Technical preparations}\label{sec:technical}

Let us start by recalling that, for $N=1,2,\dots$,
\begin{equation*}
  \Omega^N = \{\omega^N=(x_1,\dots,x_N) \;:\; \sigmalow/\sqrt{N}
  \leq |x_n| \leq \sigmahigh/\sqrt{N}, n=1,\dots,N\}
\end{equation*}
allows for the definition of the canonical process
\begin{equation*}
 X^N_n(\omega^N)=x_n \text{ for } n=1,\dots,N,
 \;\omega^N=(x_1,\dots,x_N) \in \Omega^N.
\end{equation*}
We thus can consider the canonical filtration
\begin{equation*}
  \mathcal{F}^N_n \set\sigma(X^N_m,m=1,\dots,n), \quad n=0,\dots,N,
\end{equation*}
which clearly is the same as the one generated by
\begin{equation*}
 S^N_n = s_0 \exp\left( \sum_{m=1}^n X_m\right), \quad n=0,\dots,N,
\end{equation*}
since
\begin{equation*}
 X^N_n = \ln S^N_n - \ln S^N_{n-1}, \quad n=1,\dots,N.
\end{equation*}

It will be convenient to let, for a vector $y = (y_0,\dots,y_N) \in
\mathbb{R}^{N+1}$, the function $\overline{y} \in C[0,1]$ denote the
continuous linear interpolation on $[0,1]$ determined by
$\overline{y}_{n/N} = y_n$, $n=0,\dots,N$. This gives us, in particular,
the continuous time analog $(\overline{S}^N_t)_{0 \leq t\leq 1}$ of
$(S^N_n)_{n=0,\dots,N}$.

Our first observation is that the continuity of $\mathbb F$ allows us
to write the supremum in~\eqref{eq:15} in different ways:

\begin{lem}\label{lem:101}
  Let
  \begin{align*}
   R \set  \sup_{\sigma \in \Sigma(c)} \mathbb{E}^W
  \left[\mathbb F(S^{\sigma})-\int_{0}^1\hat H_t(S^{\sigma})a^2(\sigma_t) dt\right]
  \end{align*}
 denote the right side of~\eqref{eq:15}.

 \begin{itemize}
 \item[(i)] We have
   \begin{align}\label{eq:16}
     R = \sup_{\mathbb{P} \in \mathcal{P}_{\sigmalow,\sigmahigh,c}}
     \mathbb{E}_{\mathbb{P}}\left[\mathbb F(S)-\int_{0}^1\hat H_t(S)a^2\left(\sqrt{\frac{d\langle S \rangle_t}{dt}/S_t^2 }\right) dt\right]
   \end{align}
   where $\mathcal{P}_{\sigmalow,\sigmahigh,c}$ denotes the class of
   probabilities $\mathbb{P}$ on $(C[0,1],\mathcal{B}(C[0,1]))$ under
   which the coordinate process $S=(S_t)_{0 \leq t \leq 1}$ is a
   strictly positive martingale starting at $S_0=s_0$ whose quadratic
   variation is absolutely continuous with
   \begin{align*}
     a\left(\sqrt{\frac{d\langle S \rangle_t}{dt}/S_t^2 }\right) \leq
     c, \quad 0 \leq t \leq 1, \; \mathbb{P}\text{-almost surely}.
   \end{align*}
 \item[(ii)] The supremum defining $R$ does not change when we take it
   over $\tilde{\Sigma}(c) \subset \Sigma(c)$, the class of
   progressively measurable processes $\tilde{\sigma}: [0,1] \times
   C[0,1] \to \mathbb{R}_+ $ on the Wiener space
   $(C[0,1],\mathcal{B}(C[0,1]),\mathbb{P}^W)$ such that
  \begin{itemize}
  \item There is $\delta>0$ such that
    \begin{equation}\label{3.3+}
      \sigmalow(\sigmalow-2c)^{+}+\delta\leq \tilde{\sigma}^2\leq \sigmahigh(\sigmahigh+2c)-\delta
    \end{equation}
    uniformly on $[0,1] \times C[0,1]$ and such that, in addition,
    \begin{equation}
      \label{3.3++}
      \tilde{\sigma} \equiv \sigmalow, \quad \text{ on } [1-\delta,1] \times C[0,1].
    \end{equation}
  \item $\tilde{\sigma}$ is Lipschitz continuous on $[0,1] \times
    C[0,1]$.
  \end{itemize}
 \end{itemize}
\end{lem}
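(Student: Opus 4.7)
The plan is to handle the two parts separately: part (i) by realizing the correspondence between $\Sigma(c)$ and $\mathcal P_{\sigmalow,\sigmahigh,c}$ via the martingale representation theorem, and part (ii) by a three-stage approximation argument.

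\textbf{Part (i).} For $\sigma\in\Sigma(c)$, Itô's formula yields $d\langle S^\sigma\rangle_t=\sigma_t^2(S^\sigma_t)^2\,dt$, so $\sqrt{d\langle S^\sigma\rangle_t/dt /(S^\sigma_t)^2}=\sigma_t$ pathwise. Hence $\mathrm{Law}(S^\sigma)\in\mathcal P_{\sigmalow,\sigmahigh,c}$, and the integrand in~\eqref{eq:16} rewrites as $\hat H_t(S^\sigma)a^2(\sigma_t)$, so both functionals agree on $\mathbb P=\mathrm{Law}(S^\sigma)$; each depends only on this law. For the converse direction, given $\mathbb P\in\mathcal P_{\sigmalow,\sigmahigh,c}$, set $\vartheta_t:=\sqrt{d\langle S\rangle_t/dt\cdot S_t^{-2}}$ (well-defined $\mathbb P$-a.s.\ by the absolute continuity of $\langle S\rangle$). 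Because $\log S-\log s_0+\tfrac12\int_0^\cdot\vartheta_s^2\,ds$ is a continuous martingale with bracket $\int_0^\cdot\vartheta_s^2\,ds$, the Dambis--Dubins--Schwarz theorem (applied on a possibly enlarged probability space) produces a Brownian motion $B$ with $\log S=\log s_0+\int_0^\cdot\vartheta_s\,dB_s-\tfrac12\int_0^\cdot\vartheta_s^2\,ds$, and a standard adapted-version argument transports $\vartheta$ to Wiener space as some $\tilde\vartheta\in\Sigma(c)$ with $\mathrm{Law}(S^{\tilde\vartheta})=\mathbb P$. This yields equality of the two suprema.

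\textbf{Part (ii).} The inclusion $\tilde\Sigma(c)\subseteq\Sigma(c)$ is immediate from the explicit piecewise form of $a$, so only the inequality `$\leq$' needs work; given $\sigma\in\Sigma(c)$ and $\varepsilon>0$, I will produce $\tilde\sigma\in\tilde\Sigma(c)$ whose value is within $\varepsilon$. The construction proceeds in three stages. First, enforce~\eqref{3.3+} by truncating $\sigma^2$ into the interval $[\sigmalow(\sigmalow-2c)^++\eta,\sigmahigh(\sigmahigh+2c)-\eta]$; because $a$ is monotone on $[0,\sigmalow]$ and on $[\sigmahigh,\infty)$, this operation can only decrease the penalty, and a standard $L^2$ stability estimate for the exponential SDE shows that the resulting $S$-path differs from $S^\sigma$ by $O(\eta^{1/2})$ in $L^2(\mathbb P^W)$. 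Second, approximate the bounded progressively measurable $\sigma^{(1)}$ by a Lipschitz progressively measurable $\sigma^{(2)}$ preserving the strict bounds: project the path dependence onto a finite grid of past increments and convolve in time with a smooth kernel, clipping back into the strict range. Third, enforce~\eqref{3.3++} by replacing $\sigma^{(2)}$ by $\sigmalow$ on $[1-\delta,1]$, tapering linearly over $[1-2\delta,1-\delta]$ so as to preserve Lipschitz regularity; since $a(\sigmalow)=0$, the penalty integral loses at most an $O(\delta)$ contribution and the terminal path is modified by $O(\delta^{1/2})$.

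At each stage, the uniform $L^\infty$-bound on all volatilities in play (a consequence of $a(\sigma)\le c$), the polynomial growth of $\mathbb F$, and the continuity of $\mathbb F$ and of $\hat H$ combine with SDE stability and dominated convergence to make the value functional depend continuously on the volatility in the $L^2(d\mathbb P^W\otimes dt)$ topology; choosing $\eta$, the Lipschitz-approximation parameter, and $\delta$ successively small then lets us keep the total deterioration below $\varepsilon$. The main technical obstacle will be the second stage: because $\sigma$ may depend on the whole Brownian path in a merely measurable way, a plain time-mollification cannot produce Lipschitz dependence on the path, and one must combine a Kusuoka-type finite-dimensional projection of the history with time-smoothing while respecting progressive measurability and the strict bounds imposed in the first stage.
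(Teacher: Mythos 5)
Your part (i) contains the essential gap. The easy direction is fine, but for the converse you invoke Dambis--Dubins--Schwarz plus ``a standard adapted-version argument'' to transport the volatility $\vartheta_t=\sqrt{d\langle S\rangle_t/dt}/S_t$ onto Wiener space as some $\tilde\vartheta\in\Sigma(c)$ with $\law(S^{\tilde\vartheta}\,|\,\mathbb{P}^W)=\mathbb{P}$. No such standard argument exists: this is precisely the weak-versus-strong formulation problem. Under $\mathbb{P}\in\mathcal{P}_{\sigmalow,\sigmahigh,c}$ the process $\vartheta$ is adapted to the filtration generated by $S$, and the Brownian motion $B$ produced by DDS (or by writing $dS_t=\vartheta_tS_tdB_t$, which moreover needs care where $\vartheta$ vanishes, as the constraint $a(\vartheta)\le c$ permits $\vartheta=0$ when $c\ge\sigmalow/2$) generates in general a \emph{strictly smaller} filtration; membership in $\Sigma(c)$ requires progressive measurability with respect to the Brownian filtration, and Barlow-type examples of SDEs with bounded, nondegenerate volatility admitting weak but no strong solutions show that one cannot expect to realize an arbitrary such law exactly by a Brownian-adapted volatility. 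What is true, and what the paper's proof (which simply defers to Lemmas 7.1--7.2 of \cite{DS1}) establishes, is a \emph{density} statement: the penalized value of any $\mathbb{P}\in\mathcal{P}_{\sigmalow,\sigmahigh,c}$ is approximated arbitrarily well by values of strong solutions with volatilities that are uniformly bounded inside the strict range, Lipschitz functionals of the path, and equal to $\sigmalow$ near time $1$. In other words, the discretization/projection/mollification machinery you postpone to part (ii) is exactly what is needed to prove the hard inequality of part (i); it cannot be replaced by an exact DDS transfer, and parts (i) and (ii) are really proved together by that approximation.

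Your part (ii) is the right strategy and close in spirit to the cited reference: the truncation step is sound (the pointwise change of $\sigma^2$ is at most $\eta$ because $a(\sigma)\le c$ already confines $\sigma^2$ to $[\sigmalow(\sigmalow-2c)^+,\sigmahigh(\sigmahigh+2c)]$, and truncation only decreases $a$), the terminal modification works since $\sigmalow>0$ and $a(\sigmalow)=0$, and the value functional is indeed stable under $L^2(dt\otimes\mathbb{P}^W)$ perturbations of uniformly bounded volatilities via Gronwall, continuity of $\F$, $\hat H$, $a$, and the moment bounds. But the step you yourself flag as the main obstacle --- approximating a merely progressively measurable bounded $\sigma$ by Lipschitz, progressively measurable functionals of the path while preserving the strict bounds and the value --- is only sketched, and it is the substantive content of the lemma; as it stands your proposal asserts rather than proves it, and (per the above) the same construction must also carry the weight of part (i).
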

\begin{proof}
The proof is done similarly to the proof of Lemmas 7.1--7.2 in \cite{DS1}.

\end{proof}

The following technical key lemma can be viewed as an adaption of
Kusuoka's results from~\cite{K} on super-replication with proportional
transaction costs to our uncertain volatility setting with nonlinear
costs:

\begin{lem}\label{lem:102}
   Under the assumptions of Theorem~\ref{thm2.3} the following holds true:
  \begin{itemize}
  \item[(i)] Let $c>0$ and, for $N=1,2,\dots$, let $\mathbb{Q}^N$ be a
    probability measure on $(\Omega^N,\mathcal{F}^N_N)$ for which
    \begin{equation}
      \label{eq:17}
        M^N_n \set \mathbb{E}^{\mathbb{Q}^N}[S^N_N \;|\;
        \mathcal{F}^N_n], \quad n=0,\dots,N,
    \end{equation}
    is close to $S^N$ in the sense that $\mathbb{Q}^N$-almost surely
    \begin{equation}
      \label{eq:18}
       \left|\frac{M^N_n-S^N_n}{S^N_n}\right| \leq \frac{c}{\sqrt{N}},
       \quad n=0,\dots,N.
    \end{equation}
    Then we have
    \begin{equation}
      \label{eq:19}
       \sup_{N=1,2,\dots}
       \mathbb{E}_{\mathbb{Q}^N}\left[\left(\max_{n=0,\dots,N}
           S^N_n\right)^p\right]<\infty \text{ for any } p>0
    \end{equation}
    and, with
    \begin{equation}
      \label{eq:20}
        Q^N_n \set \sum_{m=1}^n (X^N_m)^2 + 2\sum_{m=1}^n
        \frac{M^N_m-S^N_m}{S^N_m} X^N_m, \quad n=0,\dots,N,
    \end{equation}
    there is a subsequence, again denoted by $N$, such that, for $N
    \uparrow \infty$,
    \begin{equation}
      \label{eq:21}
      \law(\overline{S}^N,\overline{M}^N,\overline{Q}^N\, |\, \mathbb{Q}^N )
      \Rightarrow \law\left(\left.S,S,\int_0^. \frac{d\langle S\rangle_s}{S^2_s}
      \,\right|\,\mathbb{P}\right)
    \end{equation}
    where $\mathbb{P}$ is a probability measure in
    $\mathcal{P}_{\sigmalow,\sigmahigh,c}$, $S$ is as in
    Lemma~\ref{lem:101}~(i), and where, as before, $\overline{S}^N$ etc. denote
    the continuous interpolations on $[0,1]$ induced by the vector
    $S^N=(S^N_n)_{n=0,\dots,N}$ etc.
\item[(ii)] For any $c > 0$ and $\tilde{\sigma} \in
    \tilde{\Sigma}(c)$ as in Lemma~\ref{lem:101}~(ii), there exists a
    sequence of probability measures $\mathbb{Q}^N$, $N=1,2,\dots$, as
    in (i) such that the weak convergence in~\eqref{eq:21} holds with
    $\mathbb{P} \set \law(S^{\tilde{\sigma}} \,|\, \mathbb{P}^W)$. In
    addition we get the weak convergence (as $N\uparrow\infty$)
    \begin{equation}
      \label{eq:22}
      \law\left(\left.\overline{S}^N,\overline{M}^N, \overline{\sqrt{N}\frac{|M^N-S^N|}{S^N}}\, \right|\, \mathbb{Q}^N \right)
      \Rightarrow \law(S^{\tilde{\sigma}},S^{\tilde{\sigma}},a(\tilde{\sigma})
      \,|\,\mathbb{P}^W) .
    \end{equation}
 \end{itemize}
\end{lem}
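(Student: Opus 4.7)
For part (i), I start by observing that, writing $\epsilon^N_m := (M^N_m - S^N_m)/S^N_m$, the relative jumps $r^N_k := \Delta M^N_k / M^N_{k-1}$ of the $\mathbb{Q}^N$-martingale $M^N$ are uniformly bounded by $C/\sqrt N$ (because $|\epsilon^N_m|\le c/\sqrt N$ by~(3.18) and $|X^N_m|\le\sigmahigh/\sqrt N$). The expansion $(1+r^N_k)^p\le 1 + p\,r^N_k + C_p(r^N_k)^2$ combined with iteration of conditional expectations yields $\mathbb{E}^{\mathbb{Q}^N}[(M^N_m)^p]\le (M^N_0)^p\,e^{C_p}$ uniformly in $m,N$, whence Doob's $L^p$ inequality together with $S^N_n\le (1-c/\sqrt N)^{-1} M^N_n$ gives the moment bound~(3.19). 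Tightness of $(\overline{S}^N,\overline{M}^N,\overline{Q}^N)$ in $C[0,1]^3$ then follows from Aldous' criterion using the jump bounds together with~(3.19). Passing to a weakly convergent subsequence, realized on a common probability space via Skorokhod with a.s.~uniform convergence, I write the limit as $(S,\tilde M,\tilde Q)$; the uniform estimate $\|\overline M^N-\overline S^N\|_\infty\le (c/\sqrt N)\|\overline S^N\|_\infty\to 0$ forces $\tilde M = S$, and uniform integrability of $(M^N_N)_N$ transfers the martingale property to $S$.

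The core step is identifying $\tilde Q_t = \int_0^t d\langle S\rangle_s/S_s^2$. From the martingale identity $\mathbb{E}^{\mathbb{Q}^N}[e^{X^N_{m+1}}(1+\epsilon^N_{m+1})\,|\,\mathcal{F}^N_m] = 1+\epsilon^N_m$, a second-order Taylor expansion using $|X^N_{m+1}|,|\epsilon^N_{m+1}| = O(N^{-1/2})$ gives
\begin{equation*}
\mathbb{E}^{\mathbb{Q}^N}\!\big[(X^N_{m+1})^2 + 2\epsilon^N_{m+1}X^N_{m+1}\,\big|\,\mathcal{F}^N_m\big] = -2\,\mathbb{E}^{\mathbb{Q}^N}\big[X^N_{m+1} + \Delta\epsilon^N_{m+1}\,\big|\,\mathcal{F}^N_m\big] + O(N^{-3/2}).
\end{equation*}
Summing this through the Doob decomposition of $Q^N$ and telescoping $\sum\Delta\epsilon^N_m = \epsilon^N_n - \epsilon^N_0$ produces the discrete It\^o identity $Q^N_n = -2\log(S^N_n/s_0) - 2(\epsilon^N_n-\epsilon^N_0) + \mathcal{M}^N_n + o(1)$ for a square-integrable $\mathbb{Q}^N$-martingale $\mathcal{M}^N$. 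Letting $N\to\infty$, the $\epsilon^N$-contributions vanish and $\tilde Q_t = -2\log(S_t/s_0) + \tilde{\mathcal{M}}_t$ for some continuous limiting martingale $\tilde{\mathcal{M}}$. It\^o's formula applied to $-2\log(S/s_0)$ rewrites this as $\tilde Q_t = \int_0^t d\langle S\rangle_s/S_s^2 + \tilde{\mathcal{M}}'_t$ with $\tilde{\mathcal{M}}'$ a continuous local martingale, and the fact that $\tilde Q$ is a limit of processes that become predictable of finite variation in the scaling limit forces $\tilde{\mathcal{M}}'\equiv 0$ by uniqueness of the semimartingale decomposition. Finally the bound $|\epsilon^N|\le c/\sqrt N$ passes through the continuity of the function $a$ to place $\law(S\,|\,\mathbb{Q})$ inside $\mathcal{P}_{\sigmalow,\sigmahigh,c}$.

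For part (ii), given $\tilde\sigma\in\tilde\Sigma(c)$ I construct $\mathbb{Q}^N$ inductively by prescribing a two-point conditional law on $\{\pm \bar\sigma^N_n/\sqrt N\}$ for $X^N_{n+1}$ given $\mathcal{F}^N_n$, where $\bar\sigma^N_n := (\tilde\sigma_{n/N}(\overline S^N)\vee\sigmalow)\wedge\sigmahigh$. The two probabilities are tuned so that in the unconstrained regime $\sigmalow\le\tilde\sigma_{n/N}(\overline S^N)\le\sigmahigh$ one has $M^N\equiv S^N$ (i.e.~$\epsilon^N\equiv 0$), while in the two constrained regimes a small asymmetry of order $1/\sqrt N$ produces $\sqrt N\,|\epsilon^N_n|\approx a(\tilde\sigma_{n/N})$; the resulting contribution of $\Delta\epsilon^N$ to the quadratic variation of $M^N$ is then precisely what is needed to match $\int_0^t\tilde\sigma^2_s\,ds$ in the limit. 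The Lipschitz regularity of $\tilde\sigma$ and the strict bounds on $\tilde\sigma^2$ guarantee that the construction is well defined and $|\epsilon^N|\le c/\sqrt N$; the convergences (3.21) and (3.22) then follow from the martingale central limit theorem applied to $(S^N,M^N)$ together with pathwise uniqueness for the SDE defining $S^{\tilde\sigma}$.

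The main obstacles are, in (i), killing the residual martingale $\tilde{\mathcal{M}}'$ after the discrete It\^o expansion via a predictability/Doob--Meyer argument (since $\tilde Q$ inherits finite-variation structure only in the scaling limit, not through the $Q^N$ themselves), and in (ii) the precise calibration of the two-point probabilities in the constrained regimes so that the rescaled defect $\sqrt N\,|M^N-S^N|/S^N$ converges exactly to $a(\tilde\sigma)$, rather than merely being bounded by $c$.
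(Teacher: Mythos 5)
Your part (i) is essentially sound and runs parallel to the paper: the $L^p$ bound via the expansion of $(M^N_{n+1}/M^N_n)^p$ and Doob, tightness, and the identification of the limit of $\overline{M}^N$ as a strictly positive martingale all match. Your route to identifying the limit of $Q^N$ (Doob decomposition plus killing a residual local martingale) is a workable variant, and the obstacle you flag there is not a real one: each increment of $Q^N$ lies in $\left[(\sigmalow^2-2c\sigmahigh)/N,\,(\sigmahigh^2+2c\sigmahigh)/N\right]$, so the $Q^N$ are uniformly Lipschitz already at the discrete level, every weak limit is Lipschitz, and once you transfer the martingale property of your $\mathcal{M}^N$ to the limit (a uniform-integrability argument as for $M^N$), a continuous finite-variation local martingale starting at $0$ vanishes. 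The paper avoids this step altogether through the exact pathwise identity $Y^N_n=\frac{M^N_n}{S^N_n}-\frac{M^N_0}{S^N_0}+\ln S^N_n-\ln s_0+\frac{1}{2}Q^N_n+O(N^{-1/2})$ for $Y^N_n=\sum_{m\le n}(M^N_m-M^N_{m-1})/S^N_{m-1}$, identifying $\lim Y^N=\int dM/M$ by the Duffie--Protter theorem, whence $\lim Q^N=\langle\ln M\rangle$ with no residual martingale to dispose of.

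The genuine gap is in part (ii), which is the heart of the lemma. You prescribe two-point increments $\pm(\tilde\sigma\vee\sigmalow)\wedge\sigmahigh/\sqrt N$ with a conditional asymmetry of order $1/\sqrt N$ and claim $M^N\equiv S^N$ off the constraint and $\sqrt N|\epsilon^N|\approx a(\tilde\sigma)$ on it. This cannot work. With a bias of order $1/\sqrt N$, $\ln \overline{S}^N$ is a martingale with per-step conditional variance in $[\sigmalow^2/N,\sigmahigh^2/N]$ plus a finite-variation drift, so any weak limit has quadratic variation density in $[\sigmalow^2,\sigmahigh^2]$; whenever $\tilde\sigma$ leaves $[\sigmalow,\sigmahigh]$ with positive $dt\otimes\mathbb{P}^W$-measure (the only interesting case, since otherwise the penalty term is absent), you cannot obtain $\overline{S}^N\Rightarrow S^{\tilde\sigma}$. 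Moreover the martingale CLT you invoke does not apply to $S^N$, which is not a $\mathbb{Q}^N$-martingale (and must not be, if the limit volatility is to exceed the hard bound on the increments), and ``$M^N\equiv S^N$ in the unconstrained regime'' fails because $M^N_n=\mathbb{E}_{\mathbb{Q}^N}[S^N_N\,|\,\mathcal{F}^N_n]$ sees the entire remaining horizon, where the constraint may bind. The mechanism the paper (following Kusuoka) uses is different in kind: with $\sigma^N_n=\sigmalow\vee\tilde\sigma_{(n-1)/N}(\overline{B}^N)\wedge\sigmahigh$ and $\kappa^N_n=\frac12\bigl(\tilde\sigma^2_{(n-1)/N}/(\sigma^N_n)^2-1\bigr)$ one sets $M^N_n=S^N_n e^{\kappa^N_n X^N_n}$ and chooses the conditional probabilities $q^N_n$ with $2q^N_n-1$ of \emph{order one} and proportional to $\kappa^N_{n-1}\xi^N_{n-1}$, so that the auxiliary walk $B^N$, hence $M^N$, is a $\mathbb{Q}^N$-martingale with per-step variance $\approx\tilde\sigma^2/N$. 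It is this order-one, serially correlated bias (momentum when $\kappa>0$, mean reversion when $\kappa<0$) that pushes the limiting volatility of $S^N$ (uniformly within relative distance $c/\sqrt N$ of $M^N$, so both converge to $S^{\tilde\sigma}$) outside $[\sigmalow,\sigmahigh]$, while $\sqrt N|M^N-S^N|/S^N\approx|\kappa^N\sigma^N|\to a(\tilde\sigma)$ — the absolute value being essential, since the signed defect flips sign with each increment and does not converge. Your own summary concedes that this calibration is unresolved; as proposed, part (ii) does not go through.
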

\begin{proof}

  Let us first focus on claim~(i). It obviously suffices to
  prove~\eqref{eq:19} only for $p \in \{1,2,\dots\}$. For this we
  proceed similarly as Kusuoka for his claim~(4.23) in~\cite{K} and write
    \begin{align*}
      (M^N_{n+1})^{p} & =
      (M^N_{n})^{p}\left(1+\frac{M^N_{n+1}-M^N_n}{M^N_n}\right)^{p}\\
     &=  (M^N_{n})^{p} \left(1+\sum_{j=1}^{p} \binom{p}{j}
       \left(\frac{M^N_{n+1}-M^N_n}{M^N_n}\right)^{j}\right).
    \end{align*}
    Now observe that, when taking the $\mathbb{Q}^N$-expectation, the
    contribution from the summand for $j=1$ can be dropped since it has vanishing
     $\mathcal{F}^N_n$-conditional expectation due to the martingale
     property of $M^N$ under $\mathbb{Q}^N$. From~\eqref{eq:18} and
     $S^N_{n+1}/S^N_n = \exp(X^N_n) \in
     [e^{-\sigmahigh/\sqrt{N}},e^{\sigmahigh/\sqrt{N}}]$, it follows
     that $M^N_{n+1}/M^N_n=1+O(1/\sqrt{N})$ where the random
     $O(1/\sqrt{N})$-term becomes small uniformly in $n$ and
     $\omega^N$. Therefore the summands for $j=2,\dots,p$ are
     uniformly of the order $O(1/N)$. Thus, we obtain
     \begin{align*}
       \mathbb{E}_{\mathbb{Q}^N}\left[\left(M^N_{n+1}\right)^{p}\right]
       = (1+O(1/N))
       \mathbb{E}_{\mathbb{Q}^N}\left[\left(M^N_{n}\right)^{p}\right]
     \end{align*}
     and, so upon iteration,
     \begin{align*}
       \mathbb{E}_{\mathbb{Q}^N}\left[\left(M^N_{N}\right)^{p}\right]
       = (1+O(1/N))^N\left(M^N_{0}\right)^{p}.
     \end{align*}
     Clearly $(1+O(1/N))^N$ is bounded in $N$. The same holds for
     $M^N_0 = s_0(1+O(1/\sqrt{N})$, where we used~\eqref{eq:18}) and
     $S^N_0=s_0$. Hence,
     \begin{align*}
           \sup_{N=1,2,\dots}  \mathbb{E}_{\mathbb{Q}^N}\left[\left(M^N_{N}\right)^{p}\right]<\infty
     \end{align*}
     which by Doob's inequality entails that even
     \begin{align}\label{eq:23}
           \sup_{N=1,2,\dots}  \mathbb{E}_{\mathbb{Q}^N}\left[\left( \max_{n=0,\dots,N}M^N_{n}\right)^{p}\right]<\infty.
     \end{align}
     Because of~\eqref{eq:18}, this yields our claim~\eqref{eq:19}.

     Let us next focus on $\law(\overline{M}^N \;|\;
     \mathbb{Q}^N)_{N=1,2,\dots}$ for which we will verify
     Kolmogorov's tightness criterion (see \cite{C}) on $C[0,1]$. To this end, recall
     that $M^N_{n+j}/M^N_{n+j-1}-1=O(1/\sqrt{N})$ and so the quadratic
     variation of $M^N$ satisfies
    \begin{align*}
      \langle M^{N}\rangle_{n+l}-\langle M^{N}\rangle_{n}&=
      \sum_{j=1}^{l} \mathbb{E}_{\mathbb{Q}^N}[(M^{N}_{n+j}-M^{N}_{n+j-1})^2\;|\;\mathcal{F}^{N}_{n+j-1}] \\
      &= \left(\max_{0\leq n\leq N} M^{N}_n\right)^2 O(l/N).
    \end{align*}
    From the Burkholder--Davis--Gundy inequality and the
    bound~\eqref{eq:23} we thus get
    \begin{align*}
      \mathbb{E}_{\mathbb{Q}^N}[(M^{N}_{n+l}-M^{N}_{n})^4]=O((l/N)^2)
    \end{align*}
    which readily gives
    Kolmogorov's criterion for our continuous interpolations
    $\overline{M}^N$, $N=1,2,\dots$.

    Having established tightness, we can find a subsequence, again
    denoted by $N$, such that $\law(\overline{M}^N \;|\;
    \mathbb{Q}^N)_{N=1,2,\dots}$  converges to the law of a continuous process
    $M$ on a suitable probability space
    $(\hat{\Omega},\hat{\mathcal{F}},\hat{\mathbb{P}})$. We will show
    next that this process $M$ is a strictly positive martingale. In
    fact, by Skorohod's representation theorem, we can assume that
    there are processes $\hat{M}^N$, $N=1,2,\dots$, on
    $(\hat{\Omega},\hat{\mathcal{F}},\hat{\mathbb{P}})$ with
   \begin{align*}
     \law(\overline{M}^N \;|\; \mathbb{Q}^N) = \law(\hat{M}^N \;|\; \hat{\mathbb{P}})
   \end{align*}
   which converge $\hat{\mathbb{P}}$-almost surely to $M$ as
   $N\uparrow \infty$. It is then immediate from~\eqref{eq:23} that
   the martingale property of $M^N$ under $\mathbb{Q}^N$ gives the
   martingale property of $\hat{M}$ under $\hat{\mathbb{P}}$. To see
   that $M$ is strictly positive we follow Kusuoka's argument
   for~(4.24) and (4.25) in his paper~\cite{K} and establish
   \begin{align}\label{eq:24}
     \sup_{N=1,2,\dots} \mathbb{E}_{\mathbb{Q}^N}\left[\max_{n=0,\dots,N}\left| \ln
         M^N_{n}\right|^2\right]<\infty
   \end{align}
   since this entails the $\hat{\mathbb{P}}$-integrability of $\max_{0 \leq t
     \leq 1} |\ln M_t|$ by Fatou's lemma.  For~\eqref{eq:24} recall
   that $M^N_{m}/M^N_{m-1} = 1+O(1/\sqrt{N})$ uniformly in $m$ and
   $\omega^N$. This allows us to use Taylor's
   expansion to obtain
   \begin{align*}
     \ln M^N_{m}-\ln M^N_{m-1} = \frac{M^N_{m}-M^N_{m-1}}{M^N_{m-1}}+ O(1/N).
   \end{align*}
   Upon summation over $m=1,\dots,n$, this gives in conjunction with
   $M^N_0=s_0(1+O(1/\sqrt{N}))$:
   \begin{align*}
     \max_{n =0,\dots,N} |\ln M^N_n| \leq \max_{n=0,\dots,N}
     \left|\sum_{m=1}^n \frac{M^N_{m}-M^N_{m-1}}{M^N_{m-1}}\right|+O(1).
   \end{align*}
   Hence
   \begin{align*}
     \mathbb{E}_{\mathbb{Q}^N}\left[\max_{n=0,\dots,N}\left| \ln
         M^N_{n}\right|^2\right] &\leq
     2\mathbb{E}_{\mathbb{Q}^N}\left[\max_{n=0,\dots,N}
       \left|\sum_{m=1}^n\frac{M^N_{m}-M^N_{m-1}}{M^N_{m-1}}\right|^2\right]
     +O(1)\\
     &\leq 8\mathbb{E}_{\mathbb{Q}^N}\left[
       \sum_{m=1}^N\left(\frac{M^N_{m}-M^N_{m-1}}{M^N_{m-1}}\right )^2\right]
     +O(1),
   \end{align*}
   where for the second estimate we used Doob's inequality for the
   martingale given by the sum for which we take the maximum in the
   above expression. Recalling again that
   $M^N_{m}/M^N_{m-1}-1=O(1/\sqrt{N})$ uniformly in $m$ and $\omega^N$
   the above expectation is of order $O(1)$ and we
   obtain~\eqref{eq:24}.

   Let us finally turn to the weak convergence~\eqref{eq:19} and
   introduce, for $N=1,2,\dots$, the auxiliary discrete stochastic integrals
   \begin{align*}
     Y^N_n \set \sum_{m=1}^{n} \frac{M^N_m-M^N_{m-1}}{S^N_{m-1}},
     \quad n= 0,\dots,N.
   \end{align*}
   By applying Theorem~4.3 in~\cite{DP}, \eqref{eq:18}--\eqref{eq:19} and
   the already established weak convergence of $\law(\overline{M}^N \;|\;
   \mathbb{Q}^N)_{N=1,2,\dots}$ on $C[0,1]$, we deduce the
   weak convergence
   \begin{align}\label{eq:25}
      \law&\left(\left. \left( \frac{1}{S^N_{[Nt]}},
              M^N_{[Nt]},Y^N_{[Nt]}\right)_{0 \leq t \leq
              1}\;\right|\; \mathbb{Q}^N\right)\\\nonumber & \Rightarrow\quad
      \law\left(\left. \left( \frac{1}{M_t},
              M_t,Y_t\right)_{0 \leq t \leq1}\;\right|\; \hat{\mathbb{P}}\right)\quad \text{ as }
            N\uparrow \infty
   \end{align}
   on the Skorohod space $D[0,1] \times D[0,1] \times D[0,1]$ where $Y
   \set \int_0^. dM_s/M_s$. Hence, $M = M_0
   \exp(Y-\langle{Y}\rangle/2)$ and, in particular, $\langle \ln M
   \rangle = \langle Y \rangle$. Moreover, again by Skorohod's
   representation theorem, we can assume that there are processes
   $1/\hat{S}^N$, $\hat{M}^N$ and $\hat{Y}^N$ on $(\hat{\Omega},
   \hat{F}, \hat{\mathbb{P}})$ whose joint law under
   $\hat{\mathbb{P}}$ coincides with that in~\eqref{eq:25} and which
   converge $\hat{\mathbb{P}}$-almost surely to $1/M$, $M$, and $Y$,
   respectively. Now, recalling that $|X^N_m| \leq
   \sigmahigh/\sqrt{N}$ we can Taylor expand $e^x=1+x+x^2/2+O(x^3)$
   so that with~\eqref{eq:18} we can write
   \begin{align*}
     \frac{M^N_m-M^N_{m-1}}{S^N_{m-1}} &=
      \left(1+\frac{M^N_m-S^N_{m}}{S^N_{m}}\right)e^{X^N_m}- \frac{M^N_{m-1}}{S^N_{m-1}}\\
     &=  \frac{M^N_m}{S^N_m}-\frac{M^N_{m-1}}{S^N_{m-1}}+ X^N_m+\frac{1}{2} (X^N_m)^2+
     \frac{M^N_m-S^N_{m}}{S^N_{m}}X^N_m + O(1/\sqrt{N}^3).
   \end{align*}
   Thus, with $Q^N$  as defined in~\eqref{eq:20}, we obtain
   upon summing over $m=1, \dots, n$:
   \begin{align*}
     Y^N_n = \frac{M^N_n}{S^N_n}-\frac{M^N_{0}}{S^N_{0}}+\ln S^N_n -
     \ln S^N_0 + \frac{1}{2} Q^N_n+O(1/\sqrt{N}).
   \end{align*}
    In terms of $\hat{Y}^N$, $\hat{M}^N$, and $\hat{Q}^N\set
    (Q^N_{[Nt]})_{0 \leq t \leq 1}$ this amounts to
    \begin{align*}
      \hat{Y}^N_t =
      \frac{\hat{M}^N_{t}}{\hat{S}^N_{t}}-\frac{\hat{M}^N_{0}}{\hat{S}^N_{0}}+\ln
        \hat{S}^N_t - \ln s_0+ \frac{1}{2}\hat{Q}^N_t+O(1/\sqrt{N})
    \end{align*}
    for $t \in \{0,1/N,\dots,1\}$. Since all other terms in this
    expression converge $\hat{\mathbb{P}}$-almost surely as $N \uparrow
    \infty$, so does $\hat{Q}^N$ and its limit is given by
    \begin{align*}
      \lim_N \hat{Q}^N_t = 2({Y}_t-\ln {M}_t+\ln s_0)
      =\langle Y\rangle_t
      =\langle \ln M\rangle_t \quad 0 \leq t \leq 1.
    \end{align*}
    Now, fix $0\leq s < t \leq 1$ and observe that
   \begin{align*}
     \langle \ln M \rangle_t - \langle \ln M \rangle_s &=\lim_N
     (\hat{Q}^N_{t}-\hat{Q}^N_{s}) \\
    & = \lim_N \sum_{[Ns]<n \leq [Nt]} \hat{X}^N_n\left(X^N_n+2\frac{M^N_n-S^N_n}{S^N_n}\right)\\
    & \in [\sigmalow(\sigmalow-2c)^+(t-s),\sigmahigh(\sigmahigh+2c)(t-s)].
   \end{align*}
  Hence, $\langle \ln M \rangle$ is absolutely continuous with density
  \begin{align*}
    \frac{d \langle \ln M \rangle_t}{dt} \in
  [\sigmalow(\sigmalow-2c)^+,\sigmahigh(\sigmahigh+2c)], \quad 0 \leq
  t \leq 1,
  \end{align*}
  which readily implies
  \begin{align*}
    a\left(\sqrt{\frac{d \langle \ln M \rangle_t}{dt}}\right) \leq c.
  \end{align*}
  It thus follows that $\mathbb{P} \set \law(M \,|\,
  \hat{\mathbb{P}})$ lies in the class $
  \mathcal{P}_{\sigmalow,\sigmahigh,c}$ as considered in
  Lemma~\ref{lem:101}~(i).  By the construction and
  $\hat{\mathbb{P}}$-almost sure convergence of $\hat{S}^M$,
  $\hat{M}^N$, $\hat{Q}^N$, this proves~\eqref{eq:21} for this
  $\mathbb{P}$.

  Let us now turn to the proof of item~(ii) of our lemma and take a
  $\overline{\sigma}:[0,1] \times C[0,1] \to \mathbb{R}$ from the class
  $\tilde{\Sigma}(c)$ introduced in Lemma~\ref{lem:101}. Fix $N \in
  \{1,2,\dots\}$ and define on $\Omega^N$ the processes $\sigma^N$,
  $\kappa^N$, $B^N$, and $\xi^N$ by the following recursion:
\begin{align*}
  \sigma^N_0 \set \sigmalow \vee \tilde{\sigma}_0(0) \wedge
  \sigmahigh, \;
   B^N_0 \set 0
\end{align*}
and, for $n=1,\dots,N$,
\begin{align*}
  \sigma^N_n &\set \sigmalow \vee \tilde{\sigma}_{(n-1)/N}(\overline{B}^N) \wedge
  \sigmahigh,\\
   \kappa^N_n &\set
   \frac{1}{2}\left(\frac{\tilde{\sigma}^2_{(n-1)/N}(\overline{B}^N)}{(\sigma^N_n)^2}-1\right),\\
   \xi^N_n &\set \sqrt{N} \frac{\ln S^N_n - \ln
     S^N_{n-1}}{\sigma^N_{n-1}}=\sqrt{N} X^N_n/\sigma^N_n,\\
   B^N_n &\set B^N_{n-1}+
\frac{\exp\left((1+\kappa^{N}_n)X^N_n-\kappa^{N}_{n-1}
X^N_{n-1}\right)-1}{\sqrt{1+2\kappa^{N}_n}\sigma^{N}_n}.
\end{align*}
Observe that the progressive measurability of $\tilde{\sigma}$ ensures
that its evaluation in the definition of $\sigma^N_n$ and $\kappa^N_n$
depends on $B^N=(B^N_m)_{m=0,\dots,N}$ only via its already constructed values for
$m=0,\dots,n-1$.

Next, define the process $q^N$ by
\begin{equation}\label{5.5}
q^N_n=\frac{\exp(\kappa^{N}_{n-1}X^{N}_{n-1})-\exp(-(1+\kappa^N_n)\sigma^{N}_n/\sqrt N)}
{\exp((1+\kappa^{N}_n)\sigma^{N}_n/\sqrt N)-\exp(-(1+\kappa^{N}_n)\sigma^{N}_n/\sqrt N)}.
\end{equation}
Consider the probability measure $\mathbb{P}^N$ on
$(\Omega^N,\mathcal{F}^N_N)$ for which the random variables
$\xi^{N}_1,...,\xi^{N}_N$ are i.i.d. with
$\mathbb{P}^N(\xi^{N}_1=1)=\mathbb{P}^N(\xi^{N}_1=-1)=1/2$. From
\eqref{3.3++} it follows that there exists $\epsilon>0$ for which
$\kappa^{N}_n>\epsilon-1/2$.  Thus $|B^{N}_n-B^{N}_{n-1}|=O(N^{-1/2})$
and also $|\kappa^{N}_n-\kappa^{N}_{n-1}|=O(N^{-1/2})$ because of the
Lipschitz continuity of $\tilde{\sigma}$.  We conclude that, for
sufficiently large $N$, $q^{N}_n\in (0,1)$ $\mathbb{P}^N$-almost
surely. For such $N$ we consider $\mathbb Q^N$ given by the
Radon-Nikodym derivatives
$$
\frac{d\mathbb{Q}^N}{d\mathbb{P}^N}|_{\mathcal{F}^{N}_n}=2^N\prod_{m=1}^n
\left(q^{N}_m\mathbb{I}_{\{\xi^{N}_m=1\}}+(1-q^{N}_m)\mathbb{I}_{\{\xi^{N}_m=-1\}}\right).
$$
Since $\mathbb{P}^N[\xi^N_n =\pm 1]=1/2$, our choice of
$q^N$~\eqref{5.5} ensures that $B^{N}$ is a martingale under $\mathbb
Q^N$.  Now consider the stochastic process
\begin{equation*}
M^{N}_n \set S^{N}_n\exp(\kappa^{N}_n X^N_n),
\quad n=0,\dots,N.
\end{equation*}
From~\eqref{3.3++}
it follows that $\kappa^{N}_N=0$ for sufficiently large $N$ and so
$M^{N}_N=S^{N}_N$. Furthermore, from \eqref{3.3+} we have
$\frac{|M^{N}_n-S^{N}_n|}{S^{N}_n}\leq \frac{c}{\sqrt N}$.
Observe also that
\begin{equation}\label{5.7}
M^{N}_n=M^{N}_{n-1}\left(1+\sqrt{1+2\kappa^{N}_n}\sigma^{N}_n(B^{N}_n-B^{N}_{n-1})\right).
\end{equation}
Hence, the predictability of $\sigma^{N}$, $\kappa^{N}$ ensures that,
along with $B^N$, also $M^N$
is a martingale under $\mathbb Q^N$. Hence $M^N$ and $\mathbb{Q}^N$
are as requested in part~(i) of our present lemma.

It thus remains to establish the weak convergence \eqref{eq:22}. By
applying Taylor's expansion we get
$$
\left|q^{N}_n-\frac{\kappa^{N}_{n-1}\sigma^{N}_{n-1}\xi^{N}_{n-1}+(1+\kappa^{N}_n)\sigma^{N}_n}
{2(1+\kappa^{N}_n)\sigma^{N}_n}\right|=O(N^{-1/2}) \quad
\mathbb{P}^N\text{-almost surely}
$$
Thus,
$\left|(2q^{N}_n-1)\xi^{N}_{n-1}-\frac{\kappa^{N}_{n-1}\sigma^{N}_{n-1}}{(1+\kappa^{N}_n)\sigma^{N}_n}\right|
=O(N^{-1/2})$ $\mathbb P^N$-almost surely.
From the last equality and the definition of the measure $\mathbb Q^N$ we get that
\begin{align*}
\mathbb E_{{\mathbb Q}^N}&\left[\left((1+\kappa^{N}_n)\sigma^{N}_n\xi^{N}_n
-\kappa^{N}_{n-1}\sigma^{N}_{n-1}\xi^{N}_{n-1}\right)^2\;|\;\mathcal{F}^{N}_{n-1}\right]
\\
=&(\sigma^{N}_n)^2(1+\kappa^{N}_n)^2+(\sigma^{N}_{n-1})^2(\kappa^{N}_{n-1})^2-2(2q^{N}_n-1)(1+\kappa^{N}_n)\sigma^{N}_n\kappa^{N}_{n-1}\sigma^{N}_{n-1}\xi^{N}_{n-1}\\
=&(\sigma^{N}_n)^2(1+2\kappa^{N}_n)+ O(N^{-1/2}).
\end{align*}
This together with applying the Taylor expansion yields
$$
\mathbb E_{\mathbb
  Q^N}\left[(B^{N}_n-B^{N}_{n-1})^2\;|\;\mathcal{F}^{N}_{n-1}\right]=\frac{1}{N}+O(N^{-3/2}).
$$
From Theorem~8.7 in~\cite{A} we get the convergence of
$\law(\overline{B}^{N} \;|\;\mathbb{Q}^N)$ to Wiener measure on $C[0,1]$.
From the continuity of $\tilde{\sigma}$ it follows that the continuous
interpolation on $[0,1]$ of
$(\{\sqrt{1+2\kappa^{N}_n}\sigma^{N}_n\})_{n=0}^N$, i.e., the process
$\tilde{\sigma}_{[Nt]/N}(\overline{B}^N)$ under $\mathbb{Q}^N$ converges
in law to $\tilde{\sigma}$ under $\mathbb{P}^W$ on $C[0,1]$.
Thus Theorem~5.4 in \cite{DP} and~\eqref{5.7} give the convergence
$$
\law(\overline{B}^{N}, \overline{M}^{N} \;|\; \mathbb{Q}^N)\Rightarrow \law(W,S^{\tilde{\sigma}}\;|\;\mathbb{P}^W)
$$
on the space $C[0,1]\times C[0,1]$.  Finally, observe that we have the
joint convergence
$$\law(\overline{S}^N,\overline{M}^N, \overline{\sqrt{N}|\kappa^{N}\sigma^{N}|}\, \  |\, \mathbb{Q}^N )
\Rightarrow \law(S^{\tilde{\sigma}},S^{\tilde{\sigma}},a(\tilde{\sigma})
\,|\,\mathbb{P}^W) \text{ as } N \uparrow \infty$$
on ${C}[0,1]\times C[0,1]\times C[0,1]$
and \eqref{eq:22} follows as required.
\end{proof}

\subsubsection{Proof of  `$\leq$' in~\eqref{eq:15}}\label{sec:leq}

Applying Theorem 2.2 with $g\set g^{N,c}$ of~\eqref{eq:10} shows that
for $N=1,2,\dots$ there exists a measure $\mathbb{Q}^N$ on
$(\Omega^N,\mathcal{F}^N_N)$ such that
\begin{equation}
  \label{eq:26}
  V^{\sigmalow/\sqrt{N},\sigmahigh/\sqrt{N}}_{g^{N,c}}(\F)
  \leq \frac{1}{N}+\mathbb{E}_{\mathbb{Q}^N}\left[\mathbb F\left(\overline{S}^N\right)-
\sum_{n=0}^{N-1}G^{N,c}_n\left(\frac{M^N_n-S^N_n}{S^N_n}\right)\right]
\end{equation}
where $G^{N,c}$ is the Legendre-Fenchel transform of $g^{N,c}$ and
where $M^N$ is defined as in Lemma~\ref{lem:102}. Since by
construction $h^c$ and, thus, also $g^{N,c}$ has maximum slope $c$, we
have
\begin{equation*}
  G^{N,c}_n(\alpha) =
 \begin{cases}
  H_{n/N}(\overline{S}^N,\alpha) & \text{ if } |\alpha| \leq c,\\
  \infty & \text{ otherwise.}
 \end{cases}
\end{equation*}
In particular, the above sequence of probabilities
$(\mathbb{Q}^N)_{N=1,2,\dots}$ is as required in the first part of
Lemma~\ref{lem:102}. We thus obtain the weak
convergence~\eqref{eq:21} with some probability $\mathbb{P}^* \in
\mathcal{P}_{\sigmalow,\sigmahigh,c}$.

Due to Skorohod's representation theorem there
exists a probability space
$(\hat{\Omega},\hat{\mathcal{F}},\hat{\mathbb{P}})$ with processes
$\hat{S}^N$, $\hat{M}^N$, and $\hat{Q}^N$, $N=1,2,\dots$ and a
continuous martingale $M>0$ such that
\begin{align}\label{eq:27}
        \law(\overline{S}^N,\overline{M}^N,\overline{Q}^N\, |\, \mathbb{Q}^N )
 &=       \law(\hat{S}^N,\hat{M}^N,\hat{Q}^N\, |\, \hat{\mathbb{P}} ), \quad
 N=1,2,\dots,\\\label{eq:28}
\law(S \,|\,\mathbb{P}^*) &= \law(M \,|\, \hat{\mathbb{P}}),
\end{align}
and such that
\begin{equation}\label{eq:29}
  (\hat{S}^N,\hat{M}^N,\hat{Q}^N) \to (M,M,\langle \ln M \rangle) \quad
  \hat{\mathbb{P}}\text{-almost surely as } N \uparrow \infty.
\end{equation}
Due to~\eqref{eq:19} of Lemma~\ref{lem:102}, $\max_{0 \leq t \leq 1}
\hat{S}^N_t$ is bounded in $L^p(\hat{\mathbb{P}})$ for any $p>0$. By
Lebesgue's theorem the assumed continuity and polynomial growth of
$\mathbb{F}$ in conjunction with~\eqref{eq:27} and~\eqref{eq:29}
thus suffices to conclude that
\begin{equation}
  \label{eq:30}
  \mathbb{E}_{\mathbb{Q}^N}\left[\mathbb F\left(\overline{S}^N\right)\right]
    \to \mathbb{E}_{\hat{\mathbb{P}}}\left[\mathbb  F\left(M\right)\right]
 \text{ as } N \uparrow \infty.
\end{equation}

Below we will argue that
\begin{align}
&\liminf_{N \uparrow \infty} \mathbb{E}_{\mathbb{Q}^N}\left[\sum_{n=0}^{N-1}G^{N,c}_n\left(\frac{M^N_n-S^N_n}{S^N_n}\right)\right] 
\geq \mathbb{E}_{\hat{\mathbb P}} \left[\int_{0}^1
\hat H_t(M)
a^2\left(\sqrt{\frac{d\langle\ln M\rangle_t}{dt}}\right) \,dt\right] \label{eq:32}.
\end{align}
Combining~\eqref{eq:26} with~\eqref{eq:30} and~\eqref{eq:32} then
gives
\begin{equation}
  \label{eq:33}
  \limsup_{N\uparrow \infty}
  V^{\sigmalow/\sqrt{N},\sigmahigh/\sqrt{N}}_{g^{N,c}}(\F) \leq \mathbb{E}_{\hat{\mathbb{P}}}\left[\mathbb  F\left(M\right)-\int_{0}^1
\hat H_t(M)
a^2\left(\sqrt{\frac{d\langle\ln M\rangle_t}{dt}}\right) \,dt\right].
\end{equation}
Because of~\eqref{eq:28} the right side of~\eqref{eq:33} can be
viewed as one of the expectations considered in~\eqref{eq:16}. We
deduce from Lemma~\ref{lem:101}~(i) that `$\leq$' holds
in~\eqref{eq:15}.

Let us conclude by proving~\eqref{eq:32} and write
\begin{align}\label{eq:34}
  \mathbb{E}_{\mathbb{Q}^N}\left[\sum_{n=0}^{N-1}G^{N,c}_n\left(\frac{M^N_n-S^N_n}{S^N_n}\right)\right]
  = \mathbb{E}_{\hat{\mathbb{P}}}\left[\int_0^1  NH_{[Nt]/N}\left(\hat{S}^N,\Delta^N_t/\sqrt{N}\right)\,dt\right]
\end{align}
where
\begin{align*}
 -c \leq \Delta^N_t \set \sqrt{N}
 \frac{\hat{M}^N_{[Nt]/N}-\hat{S}^N_{[Nt]/N}}{\hat{S}^N_{[Nt]/N}} \leq c.
\end{align*}
Let $b:\mathbb R\rightarrow\mathbb R$ be a convex function such that
$b(u)=a^2(\sqrt u)$ for $u\geq 0$, for instance
\begin{equation}\label{eq:35}
  b(u) \set
  \begin{cases}
    -u, & u\leq-\sigmalow^2,\\
    \frac{1}{4}\frac{(\sigmalow^2-u)^2}{\sigmalow^2}, & -\sigmalow^2<u<0,\\
    a(\sqrt u)^2, & u\geq 0.
  \end{cases}
\end{equation}
By Assumption~\ref{quadratic} and the $\hat{\mathbb{P}}$-almost sure
convergence of $\hat{S}^N$ to $M$, the integrand
$NH_{[Nt]/N}\left(\hat{S}^N, \Delta/\sqrt{N}\right)$ converges
uniformly in $t \in [0,1]$ and $\Delta \in [-c,c]$ to $\hat{H}_t(M)
\Delta^2$. Moreover, from Lemma \ref{lem3.17} below we have
$(\Delta^N_t)^2 \geq b({N\Delta\hat{Q}^N_{[Nt]/N}})$. Hence, the
$\liminf$ in~\eqref{eq:32} is bounded from below by
\begin{align}\label{eq:36}
  \liminf_{N \uparrow \infty}   \mathbb{E}_{\hat{\mathbb{P}}}\left[\int_0^1
    \hat{H}_t(M) (\Delta^N_t)^2\,dt\right] \geq
  \liminf_{N \uparrow \infty}   \mathbb{E}_{\hat{\mathbb{P}}}\left[\int_0^1
    \hat{H}_t(M) b(N{\Delta\hat{Q}^N_{[Nt]/N}})\,dt\right].
\end{align}
From their definition it follows that the processes $(N\Delta
\hat{Q}^N_{[Nt]/N})_{0 \leq t \leq 1}$, $N=1,2,\dots$ take values in
the interval $[\sigmalow^2-2c\sigmahigh,\sigmahigh^2+2c\sigmahigh]$,
and so they are bounded uniformly. Hence, we can use Lemma~A1.1
in~\cite{DS2} to find, for $N=1,2,\dots$, convex combinations
$\delta^N$ of elements in this sequence of processes with index in
$\{N, N+1, \dots\}$ which converge $\hat{\mathbb{P}}\otimes dt$-almost
every. Denote the limit process by $(\delta_t)_{0 \leq t \leq 1}$.  Observe
that
$$\int_{0}^{t} \delta_u du=\lim_{N\rightarrow\infty}\int_{0}^{t} N\Delta \hat{Q}^N_{[Nu]/N}du=
\langle \ln M \rangle_{t}$$
and so we conclude that
$\delta_t = d\langle \ln M \rangle_t/dt$,
 $\hat{\mathbb{P}}\otimes dt$-almost every.
Hence, by
Fatou's lemma,
\begin{align*}
 \mathbb{E}_{\hat{\mathbb{P}}}\left[\int_0^1 \hat{H}_t\left(M\right)
   a^2(\sqrt{\delta_t}) \,dt\right]
  &=\mathbb{E}_{\hat{\mathbb{P}}}\left[\int_0^1 \hat{H}_t\left(M\right)
     b({\delta_t}) \,dt\right]\\
       &\leq
  \liminf_{N \uparrow \infty} \mathbb{E}_{\hat{\mathbb{P}}}\left[\int_0^1
    \hat{H}_t\left(M\right)
      b({\delta^N_t}) \,dt\right].
\end{align*}
By convexity of $b$ and by the construction of
$\delta^N$ this last $\liminf$ is not larger than the $\liminf$ on the
right side of~\eqref{eq:36}. Hence, we can combine these estimates
to obtain our assertion~\eqref{eq:32}.

\subsubsection{Proof of  `$\geq$' in~\eqref{eq:15}}\label{sec:geq}

Because of Lemma~\ref{lem:101} it suffices to show that
\begin{equation}
  \label{eq:37}
 \liminf_{N \uparrow \infty} V^{\sigmalow/\sqrt{N},\sigmahigh/\sqrt{N}}_{g^{N,c}}(\F)
  \geq \mathbb{E}_{\mathbb{P}^W}
  \left[\mathbb F(S^{\tilde{\sigma}})-\int_{0}^1\hat H_t(S^{\tilde{\sigma}})a^2(\tilde{\sigma}_t) dt\right]
\end{equation}
for $\tilde{\sigma} \in \tilde{\Sigma}(c)$. For such $\tilde{\sigma}$, we can apply
Lemma~\ref{lem:102}(ii) which provides us with probabilities
$\mathbb{Q}^N$ and martingales $M^N$ on $(\Omega^N,\mathcal{F}^N_N)$, $N=1,2,\dots$, such
that
\begin{equation}
  \label{eq:38}
        \law\left(\left.\overline{S}^N,\overline{M}^N,\overline{\sqrt{N}\frac{|M^N-S^N|}{S^N}}\,
            \right|\, \mathbb{Q}^N\right)
      \Rightarrow \law\left(\left.S^{\tilde{\sigma}},S^{\tilde{\sigma}},a(\tilde{\sigma})
      \,\right|\,\mathbb{P}^W\right) \text{ as } N \uparrow \infty.
\end{equation}
We can now use Skorohod's representation theorem exactly as in
Section~\ref{sec:leq} to obtain, in analogy with~\eqref{eq:30}, that
\begin{equation}
  \label{eq:39}
  \mathbb{E}_{\mathbb{Q}^N}\left[\mathbb F\left(\overline{S}^N\right)\right]
    \to \mathbb{E}_{\mathbb{P}^W}\left[\mathbb  F\left(S^{\tilde{\sigma}}\right)\right]
 \text{ as } N \uparrow \infty.
\end{equation}
Set
\begin{align*}
 -c \leq \Delta^N_t \set \sqrt{N}
 \frac{\hat{M}^N_{[Nt]/N}-\hat{S}^N_{[Nt]/N}}{\hat{S}^N_{[Nt]/N}} \leq c.
\end{align*}
In analogy with~\eqref{eq:34} we have that
\begin{align}\label{eq:40}
 \lim_{N \uparrow \infty}&
 \mathbb{E}_{\mathbb{Q}^N}\left[\sum_{n=0}^{N-1}G^{N,c}_n\left(\frac{M^N_n-S^N_n}{S^N_n}\right)\right]\\\nonumber
 &=\lim_{N \uparrow \infty}\mathbb{E}_{\mathbb{Q}^N}\left[\int_0^1
    NH_{[Nt]/N}\left(\hat{S}^N,\Delta^N_t/\sqrt{N}\right)\,dt\right]\\\nonumber
  &= \mathbb{E}_{\mathbb{P}^W}[\int_{0}^1\hat H_t(S^{\tilde{\sigma}})a^2(\tilde{\sigma}_t) dt],
\end{align}
where the last two equalities follow from our assumptions on $H$, the
moment estimate in Lemma~\ref{lem:102} (which gives uniform integrability)
and~\eqref{eq:38} (together with the Skorohod representation
theorem).

Combining~\eqref{eq:39} with~\eqref{eq:40} then allows us to write
the right side of~\eqref{eq:37} as
\begin{align*}
  \mathbb{E}_{\mathbb{P}^W} & \left[\mathbb F(S^{\tilde{\sigma}})-\int_{0}^1\hat  H_t(S^{\tilde{\sigma}})a^2(\tilde{\sigma}_t) dt\right] \\
  &=
  \lim_{N\uparrow \infty} \mathbb{E}_{\mathbb{Q}^N}\left[\mathbb F\left(\overline{S}^N\right)-\sum_{n=0}^{N-1}G^{N,c}_n\left(\frac{M^N_n-S^N_n}{S^N_n}\right)\right].
\end{align*}
On the other hand, Theorem~\ref{thm2.1} reveals that such a limit is a
lower bound for the left side of~\eqref{eq:37}. This accomplishes our
proof that `$\geq$' holds in~\eqref{eq:15} \qed

We complete the section with the following elementary inequality which
we used in the previous proof:
\begin{lem}\label{lem3.17}
For $x,y\in\mathbb R$ such that
$|x|\in [\sigmalow,\sigmahigh]$ we have
$$b(x^2+2xy)\leq y^2,$$
where $b$ denotes the function of~\eqref{eq:35}.
\end{lem}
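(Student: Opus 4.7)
The plan is to split into cases according to the piecewise definition of $b$, based on the value of $u \set x^2+2xy$. The useful starting identity is
\[
u + y^2 = x^2 + 2xy + y^2 = (x+y)^2 \geq 0,
\]
which immediately handles the linear branch: if $u \leq -\sigmalow^2$, then $b(u) = -u \leq y^2$.

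For the quadratic branch, which covers $u \in (-\sigmalow^2, \sigmalow^2)$ (the two middle pieces of $b$ agree on the common formula $(\sigmalow^2-u)^2/(4\sigmalow^2)$), the inequality $b(u) \leq y^2$ rewrites as $(\sigmalow^2 - x^2 - 2xy)^2 \leq 4\sigmalow^2 y^2$. I would expand the difference and observe that it factors as
\[
4\sigmalow^2 y^2 - (\sigmalow^2-x^2-2xy)^2 = -(\sigmalow^2 - x^2)\bigl[\sigmalow^2 - (x+2y)^2\bigr].
\]
Since $|x| \geq \sigmalow$, the factor $\sigmalow^2 - x^2$ is $\leq 0$, so the desired inequality reduces to $(x+2y)^2 \leq \sigmalow^2$. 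This follows from the case hypothesis $|x(x+2y)| = |u| < \sigmalow^2$: dividing by $|x| \geq \sigmalow$ yields $|x+2y| < \sigmalow^2/|x| \leq \sigmalow$, with the sign analysis (in the two sub-cases $x \geq \sigmalow$ and $x \leq -\sigmalow$) showing that $x+2y$ stays in the right half-interval so that this bound on $|x+2y|$ indeed holds.

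The middle range $u \in [\sigmalow^2, \sigmahigh^2]$ is trivial since $a(\sqrt{u}) = 0$ there, so $b(u) = 0 \leq y^2$. The far right branch $u > \sigmahigh^2$ is handled by the same factoring trick: the inequality becomes $(x^2 + 2xy - \sigmahigh^2)^2 \leq 4\sigmahigh^2 y^2$, whose difference factors as $(x^2 - \sigmahigh^2)\bigl[(x+2y)^2 - \sigmahigh^2\bigr]$. With $|x| \leq \sigmahigh$ making the first factor $\leq 0$, what remains is $(x+2y)^2 \geq \sigmahigh^2$, which follows from $x(x+2y) > \sigmahigh^2$ by dividing by $x$ (observing that $x$ and $x+2y$ must have the same sign under this constraint, with $|x| \leq \sigmahigh$).

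There is no real obstacle: the whole argument is an organized case analysis driven by one algebraic factorization that is symmetric between the $\sigmalow$ and $\sigmahigh$ sides. The only mildly delicate point is checking the sign cases when dividing the inequality $x(x+2y) < \sigmalow^2$ (resp.\ $> \sigmahigh^2$) by $x$, but both sub-cases go through in exactly the same way.
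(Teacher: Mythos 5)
Your proof is correct, and while it follows the same inevitable case split along the branches of $b$, the way you settle the two nontrivial (quadratic) branches differs from the paper. The paper treats them asymmetrically: for $u=x^2+2xy\geq\sigmahigh^2$ it uses the direct chain $0\leq x^2+2xy-\sigmahigh^2\leq 2xy\leq 2\sigmahigh|y|$, and for $|u|\leq\sigmalow^2$ it observes that $f_z(v)=(z-v)^2/v$ is nondecreasing on $v\geq\sigmalow^2$ when $|z|\leq\sigmalow^2$, so that $4b(u)=f_u(\sigmalow^2)\leq f_u(x^2)=4y^2$. You instead use the single identity $\bigl(x^2+2xy-A\bigr)^2-4Ay^2=(x^2-A)\bigl[(x+2y)^2-A\bigr]$ (valid for any $A$, and indeed correct upon expansion), applied with $A=\sigmalow^2$ and $A=\sigmahigh^2$, reducing both branches to the sign of $(x+2y)^2-A$, which you obtain from the case hypothesis on $x(x+2y)$ by dividing by $|x|$. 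This buys a pleasant symmetry between the $\sigmalow$ and $\sigmahigh$ sides at the cost of one algebraic verification; the paper's monotonicity trick is marginally shorter but less uniform. Two cosmetic remarks: in the low case the ``sign analysis'' you invoke is unnecessary, since $|x+2y|=|u|/|x|<\sigmalow^2/\sigmalow=\sigmalow$ follows directly from the absolute-value bound (and if $x^2=\sigmalow^2$ the first factor vanishes anyway); and, like the paper's own argument, your division by $|x|\geq\sigmalow$ implicitly uses $\sigmalow>0$, which is the regime where the lemma is applied in the proof of Theorem~\ref{thm2.3}.
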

\begin{proof}
We distinguish four cases:
\begin{itemize}
\item[(i)] $x^2+2xy\geq\sigmahigh^2$.  In this case $0\leq
  x^2+2xy-\sigmahigh^2\leq 2xy\leq 2\sigmahigh |y|$ and so
$$b(x^2+2xy)=\frac{1}{4}\frac{(x^2+2xy-\sigmahigh^2)^2}{\sigmahigh^2}\leq
y^2.$$

\item[(ii)] $\sigmalow^2\leq x^2+2xy\leq \sigmahigh^2$. In this case $b(x^2+2xy)=0$ and the statement is trivial.

\item[(iii)] $|x^2+2xy|\leq\sigmalow^2$.  Set $z=x^2+2xy$. Assume that
  $z$ is fixed and introduce the function $f_z(u)=\frac{(z-u)^2}{u}$,
  $u\geq\sigmalow^2$.  Observe that for $z\in
  [-\sigmalow^2,\sigmalow^2]$ the derivative $f_z'(u)=1-z^2/u^2\geq 0$
  and so
$$b(x^2+2xy)=\frac{1}{4}f_z(\sigmalow^2)\leq \frac{1}{4} f_z(x^2)=y^2.$$
\item[(iv)] Finally, assume that $x^2+2xy\leq -\sigmalow^2$. Clearly
  $x^2+2xy +y^2\geq 0$ and so
$$b(x^2+2xy)=-x^2-2xy\leq y^2.$$
\end{itemize}
\end{proof}


\end{document}